\tikzstyle{standard}=[fill=white, draw=black, shape=circle, scale=1]
\tikzstyle{label}=[fill=white, draw=white, scale=1]
\tikzstyle{weighted directed}=[->]
\tikzstyle{squiggly}=[->, decoration={{snake}}, tikzit draw={rgb,255: red,191; green,255; blue,0}, decorate]
\newcounter{fig}
\DeclareMathOperator*{\argmin}{argmin}
\DeclareMathOperator*{\E}{\mathbb{E}}
\DeclareMathOperator*{\len}{len}
\newcommand{\eps}{\varepsilon}
\newcommand{\sP}{\mathcal{P}}
\begin{document}
\title{Competition Alleviates Present Bias in Task Completion}
%
%
\author{Aditya Saraf 
\thanks{Research supported in part by NSF grant CCF-1813135.} \and 
Anna R. Karlin\thanks{Research supported by NSF grant CCF-1813135.} \and
Jamie Morgenstern
}
%
%
\institute{University of Washington, Seattle, WA, USA\\
\email{\{sarafa,karlin,jamiemmt\}@cs.washington.edu}}
\maketitle              
\begin{abstract}
We build upon recent work \citep{kleinberg2014naive,kleinberg2016sophisticated,kleinberg2017multiple} that considers \emph{present biased} agents, who place more weight on costs they must incur now than costs they will incur in the future. They consider a graph theoretic model where agents must complete a task and show that present biased agents can take exponentially more expensive paths than optimal. We propose a theoretical model that adds \emph{competition} into the mix -- two agents compete to finish a task first. We show that, in a wide range of settings, a small amount of competition can alleviate the harms of present bias. This can help explain why biased agents may not perform so poorly in naturally competitive settings, and can guide task designers on how to protect present biased agents from harm. Our work thus paints a more positive picture than much of the existing literature on present bias.

\keywords{present bias \and behavioral economics \and incentive design}
\end{abstract}

\section{Introduction}

One of the most influential lines of recent economic research has been \textit{behavioral} game theory \citep{ariely2008predictably,kahneman1979prospect}. 
The majority of economics research makes several idealized assumptions about the behavior of rational agents to prove mathematical results. Behavioral game theory questions these assumptions and proposes models of agent behavior that more closely align with human behavior. 
Through experimental research~\citep{dellavigna2006paying,dellavigna2009psychology}, behavioral economists have observed and codified several common types of cognitive biases, from loss aversion \citep{kahneman1979prospect} (the tendency to prefer avoiding loss to acquiring equivalent gains) to the sunk cost fallacy \citep{arkes1985psychology} (the tendency to factor in previous costs when determining the best future course of action) to present bias \citep{frederick2002time} (the current topic). 
One primary goal of theorems in game theory  is to offer predictive power. This perspective is especially important in the many computer science applications of these results, from modern ad auctions to cryptocurrency protocols. If these theorems are to predict human behavior, the mathematical models ought to include observed human biases. Thus, rather than viewing behavioral game theory as conflicting with the standard mathematical approach, the experimental results of behavioral game theory can inform more sophisticated mathematical models. This paper takes a step towards this goal, building on seminal work of~\citet{kleinberg2014naive} and~\citet{kleinberg2016sophisticated,kleinberg2017multiple}, who formulated a mathematical model for planning problems where agents are present biased.

Present bias refers to overweighting immediate costs relative to future costs. This is a ubiquitous bias in human behavior that explains diverse phenomena. The most natural example is procrastination, the familiar desire to delay difficult work, even when this predictably leads to negative consequences later. Present bias can also model the tendency of firms to prefer immediate gains to long-term gains and the tendency of politicians to prefer immediate results to long-term plans. One simple model of present bias\citep{kleinberg2014naive,kleinberg2016sophisticated,kleinberg2017multiple} is to multiply costs in the current time period by present bias parameter $b$ when making plans. This model is a special case of hyperbolic discounting, where costs are discounted in proportion to how much later one would experience them. But even this special case suffices to induces \emph{time-inconsistency}, resulting in a rich set of strategies consistent with human behavior.

Examples of time inconsistent behavior extend beyond procrastination. For example, one might undertake a project, and abandon it partway through, despite the underlying cost structure remaining unchanged. One might fail to complete a course with no deadlines, but pass the same course with weekly deadlines. Many people pay for a gym membership but never use it. \citet{kleinberg2014naive} presented the key insight that this diverse range of phenomena can all be expressed in a single graph-theoretic framework, which we describe below.

Fix a directed, acyclic graph $G$, with designated source $s$ and sink $t$. Refer to $G$ as a task graph, where $s$ is the start of the task and $t$ the end. A path through this graph corresponds to a plan to complete the task; each edge represents one step of the plan. Each edge has a weight corresponding to the cost of that step. 

The goal of an agent is to complete the task while incurring the least cost (i.e., to take the cheapest path from $s$ to $t$). An optimal agent will simply follow such a cheapest path. A \textit{naive} present biased agent with bias parameter $b$ behaves as follows. At $s$, they compute their perceived cost for taking each path to $t$ by summing the weights along this path with the first edge scaled up by $b > 1$. They choose the path with the lowest perceived cost and take \textit{one} step along this path, say to $v$, and recompute their perceived cost along each the path from $v$ to $t$. Notice that such an agent may choose a path at $s$, take one edge along that path, and then deviate away from it. This occurs because the agent believes that, after the current choice of edge, they will pick the path from $v$ to $t$ with lowest true cost. But once they arrive at $v$, their perceived cost of a path differs from the true cost, and they pick a path with lowest perceived cost. This is why the agents are considered naive: they incorrectly assume their future self will behave optimally, and thus exhibit time-inconsistent behavior. See \autoref{fig:branching} for an example.

\begin{figure}[b]
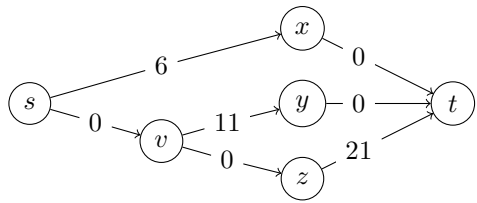

    \centering
    \tikzfig{branching}
    \caption{The optimal path is $(s, x, t)$ with total cost $6$. However, an agent with bias $b = 2$ will take path $(s, v, z, t)$, with cost $21$. Importantly, when the agent is deciding which vertex to move to from $s$, they evaluate $x$ as having total cost $12$, while $v$ has total cost $11$. This is because they assume they will behave optimally at $v$ by taking path $(v, y, t)$. However, they apply the same bias at $v$ and deviate to the worst possible path.}
    \label{fig:branching}
\end{figure}
The power of this graph theoretic model is that it allows us to answer questions over a range of planning problems, and to formally investigate which tasks represent the ``worst-case'' cost of procrastination. This is useful both to understand how present-biased behavior differs from optimal behavior and to design tasks to accommodate present bias. We now briefly summarize the existing literature, to motivate our introduction of competition to the model.

\subsection{Prior Work}
The most striking result is that there are graphs where the \textit{cost ratio} (the ratio of the optimal agent's cost to the biased agent's cost) is exponential in the size of the graph. In addition, all graphs with exponential cost ratio have a shared structure -- they all have a large $n$-fan as a graph minor (and graphs without exponential cost ratio do not) \citep{kleinberg2014naive,tang2017computational}. So this structure encodes the worst-case behavior for present bias in the standard model (and we later show how competition is especially effective in this graph). An $n$-fan is pictured in \autoref{fig:n-fan}. 

\begin{figure}
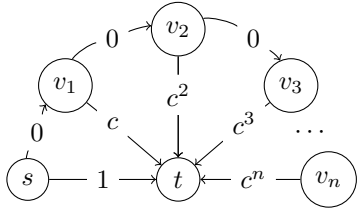

    \centering
    \tikzfig{n-fan}
    \caption{A naive agent with bias $b > c > 1$ will continually choose to delay finishing the task.}
    \label{fig:n-fan}
\end{figure}

The exponential cost ratio demonstrates the severe harm caused by present bias. How, then, can designers of a task limit the negative effects of present bias? \citet{kleinberg2014naive} propose a model where a reward is given after finishing the task, and where the agent will abandon the task if at any point, they perceive the remaining cost to be higher than the reward. Unlike an optimal agent, a biased agent may abandon a task partway through. \autoref{fig:gym} uses the gym membership example from earlier to show this. As a result, they give the task designer the power to arbitrarily delete vertices and edges, which can model deadlines, as \autoref{fig:class} shows. 
They then investigate the structure of \textit{minimally motivating subgraphs} -- the smallest subgraph where the agent completes the task, for some fixed reward. Follow-up work of \citet{tang2017computational}  shows that finding \textit{any} motivating subgraph is NP-hard. Instead of deleting edges, \citet{albers2019motivating} consider the problem of spreading a fixed reward onto arbitrary vertices to motivate an agent, and find that this too is NP-hard (with a constrained budget). For other recent work involving present bias, see \citep{gravin2016procrastination,albers2017value,yan2019time,oren2019principal,ma2019penalty}.

\begin{figure}[b]
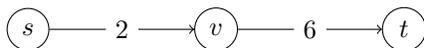

    \centering
    \tikzfig{gym_project_abandonment}
    \caption{Let $(s, v)$ represent buying a gym membership and $(v, t)$ represent working out regularly for a month \citep{roughgarden2016cs269i}. At $t$, the agent receives a reward of $11$ due to health benefits. With bias $b = 2$, the agent initially believes this task is worth completing, but due to his bias, abandons the task at vertex $v$, after having already purchased the membership. This same example works more generally for project abandonment, where $(s, v)$ could represent the easier planning/conceptual stage while $(v, t)$ represents the more difficult execution.}
    \label{fig:gym}
\end{figure}

\begin{figure}[]
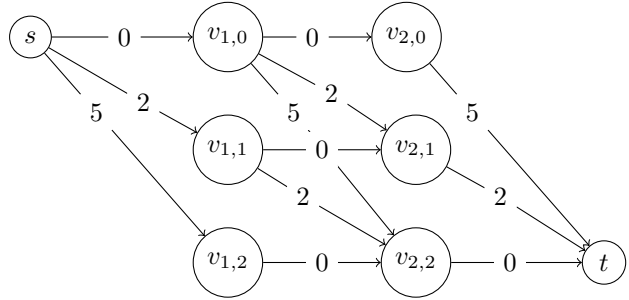

    \centering
    \tikzfig{class_deadlines}
    \caption{\citet{roughgarden2016cs269i} gave this example to represent a three week course with two assignments. Let $v_{i,j}$ correspond to being in week $i$ with $j$ assignments completed, and suppose the reward of the course is $9$. Note that a student with bias $b = 2$ will procrastinate until $v_{2,0}$ and then ultimately abandon the course. However, if the instructor required the first assignment to be completed by week 2 (which corresponds to removing $v_{2,0}$), the student would complete the course.}
    \label{fig:class}
\end{figure}

The above results all focus on \textit{accommodating} present bias rather than \textit{alleviating} it. By that, we mean that the approaches all focus on whether the agent can be convinced to complete the task -- via edge deletion or reward dispersal -- but not on guarding the agent from suboptimal behavior induced by their bias. \cite{kleinberg2016sophisticated} partially investigates the latter question in a model involving \textit{sophisticated} agents, who plan around their present bias. They consider several types of \textit{commitment devices} -- tools by which sophisticated agents can constrain their future selves. However, these tools may require more powerful agents or designers and don't necessarily make sense for naive agents. We take a different approach -- we show that adding competition can simultaneously explain why present-biased agents may not perform exponentially poorly in ``natural'' games and guide task designers in encouraging biased agents towards optimal behavior. 

\subsection{Our Model}
In our model, a task is still represented by a directed, acyclic graph $G$, with a designated source $s$ and sink $t$. There are two naive present-biased agents, $A_1$ and $A_2$, both with bias $b$, who compete to get to $t$ first. The cost of a path is the sum of the weights along the path, and time is represented by the number of edges in the path, which we call the {\em length} of the path. In other words, each edge represents one unit of time. The first agent to get to $t$ gets a reward of $r$; ties are resolved by evenly splitting the reward. Recall that naive agents believe that they will behave optimally in the future. Thus, an agent currently at $u$ considers the cost to reach the target $t$ to be  $b c(u, v)$ plus the cost of the optimal path from $v$ to $t$ minus the reward of that path. More formally, let $\mathcal{P}(v \to t)$ denote the set of paths from $v \to t$ and let $P(s \to u)$ denote the path the agent has taken to $u$. Let $C_n(u, v)$ denote the remaining cost that the naive agent believes they will incur while at $u$ and planning to go to $v$. The subscript $n$ stands for ``naive'' (to help distinguish from $c(u, v)$, the cost of the edge $(u, v)$).  Then:
\begin{equation}
\label{eq:Cn}
    C_n(u, v) = b \cdot c(u, v) + \min_{\mathclap{P(v \to t) \in \sP(v \to t)}} c(P(v \to t)) - R_{A_2}(P(s \to u) \cup (u, v) \cup P(v \to t)),
\end{equation}
where $c(P) = \sum_{e \in P} c(e)$ denotes the cost of path $P$ and $R_{A_2}(P)$ denotes the reward of taking path $P$ from $s$ to $t$. This reward depends on the path the other agent $A_2$ takes. Specifically, 
if $A_2$ takes a path of length $k$, and $P(s \to u) \cup (u, v) \cup P(v \to t)$ is a path of length  $\ell$, then 
$$R_{A_2}(P(s \to u) \cup (u, v) \cup P(v \to t)) = \begin{cases}
r & \ell < k \\
\frac{r}{2} & \ell = k \\
0 & \text{ otherwise}.
\end{cases}
$$
We will often rewrite the second term in \eqref{eq:Cn}, for ease of notation, as $\min_{P_v} c(P_v) - R_{A_2}(P_{s\to u, v \to t})$. We sometimes refer instead to the naive agent's \textit{utility}, which is the negation of this cost.
Given this cost function, the naive agent chooses the successor of node $u$ via $$S(u) = \argmin_{v: (u,v) \in E} C_n(u, v)$$ See \autoref{fig:compEx} for an example. 

\begin{figure}[b]
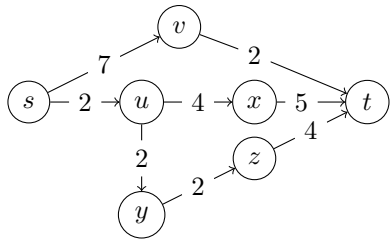

    \centering
    \tikzfig{comp_example}
    \caption{Suppose $r = 5$, the bias $b = 2$, and assume $A_2$ takes path $(s, u, x, t)$. Then at $s$, $A_1$ prefers to take $u$ for perceived cost $4+4+5-2.5=10.5$. Notice that, due to the reward, the path $A_1$ believes he will take from $u$ is $(u, x, t)$, despite $(u, y, z, t)$ having lower cost. However, at $u$, $A_1$ evaluates the lower path to be cheaper, despite losing the race. This shows that a reward of $5$ does not ensure a Nash equilibrium on $(s, u, x, t)$ when $b=2$.}
    \label{fig:compEx}
\end{figure}

We now consider how this model of competition might both explain the outcomes of natural games and inform task designers on how to elicit better behavior from biased agents. For a natural game, consider the classic example of two companies competing to expand into a new market. Both companies want to launch a similar product, and are thus considering the same task graph $G$. The companies are also present biased, since shareholders often prefer immediate profit maximization/loss minimization over long term optimal behavior. The first company to enter the market gains an insurmountable advantage, represented by reward $r$. If the companies both enter the market at the same time, they split the market share, each getting reward $r/2$. This arrangement can be modeled within our framework, and the competition between the companies should lead them to play a set of equilibrium strategies.

For a designed game, consider the problem of encouraging students to submit final projects before they are due. The instructor sets  a deadline near the end of finals week to give students flexibility to complete the project when it best fits their schedule. The instructor also knows that (1) students tend to procrastinate and (2) trying to complete the final project in a few days is much more challenging than spreading it out. They would like  to convince students to work on and possibly submit their assignments early, \textit{without} changing the deadline (to allow flexibility for the students whom it suits best). One possible solution would be to give a small amount of extra credit to the first submission. How might they set this reward to encourage early submissions?

For another example of a designed game, consider a gym that enjoys increased enrollment at the start of the year. However, their new customers tend to not visit the gym frequently, and eventually cancel their membership. To remedy this, they try a new promotion, where new customers are offered free fitness classes, and the first customer(s) to attend 5 such classes gets a discount on their annual membership. How effective might such a scheme be at encouraging their new customers to regularly use their gym?
Across these three examples, the intuition is that competition will alleviate the harms of present bias by driving agents towards optimal behavior.

\subsection{Summary of Results}
We have introduced a model of competition for completing tasks along some graph. We warm up by analyzing these games
absent present bias. 
Namely, we classify all Nash equilibria for an arbitrary task graph with unbiased agents.  This analysis involves characterizing the set of paths which are the cheapest of a given length.

We then analyze these games when agents have equal present bias. We show that a very small reward induces a Nash equilibrium on the optimal path, for any graph with a \textit{dominant} path. This is a substantial improvement over the exponential worst case cost ratio experienced without competition. We then discuss how time-inconsistency defeats the intuition that higher rewards cause agents to prefer quicker paths. Despite this complication, we provide an algorithm that, given arbitrary graph $G$ and path $Q$, determines the minimum reward needed to get a Nash equilibrium on $Q$, if possible. 

Finally, we add an element of \textit{bias uncertainty} to the model, by drawing agents' biases iid from distribution $F$ and, for the $n$-fan, describe the relationship between $F$ and the reward required for a Bayes-Nash equilibrium on the optimal path. For a wide range of distributions, we find small rewards suffice to ensure that agents behave optimally (with high probability) in equilibrium. For the stronger goal of ensuring a constant expected cost ratio, it suffices to offer reward linear in $n$ when $F$ is not heavy-tailed; competition thus helps here as well.

\section{Nash Equilibria with Unbiased Competitors}
To build intuition, we first describe the Nash equilibria of these games when agents have no present bias. We also pinpoint where the analysis will change with the introduction of bias. Notice that each path $P$ in the graph is a strategy, with payoffs either $u_w = r - c(P)$, $u_t = r/2 - c(P)$ or  $u_l = -c(P)$, depending on whether the agent wins, ties or loses, respectively. (These in turn depend on the path taken by the opponent.) We first rule out dominated paths. Notice that if $u_l(P) \ge u_w(P')$, path $P'$ is dominated by path $P$, regardless of the path taken by the opponent. Also, if $u_w(P) \ge u_w(P')$ and $|P| \le |P'|$ (where $|P|$ is the number of edges in $P$), then $P'$ is dominated. Therefore, for any length $k$, a single cheapest path of length $k$ will (weakly) dominate all other paths of length $k$.

For a given graph $G$, let $P_1, \ldots, P_n$ be a minimal set of non-dominated paths, where  $|P_i| < |P_{i+1}|$ for each $1\le i < n$.  Thus, $P_1$ is the \textit{quickest} path, the remaining path of minimum length. Summarizing what we know about these paths:
\begin{enumerate}
    \item \textit{Winning is better than losing}: for any pair of paths $(P_i, P_j)$, we know that $u_w(P_i) > u_l(P_j)$. Thus, in particular, $c(P_1) - c(P_n) \le r$.
    \item \textit{Longer paths are more rewarding}: That is, $c(P_i)> c(P_{i+1})$ for each $i$. Otherwise, $P_{i+1}$ would be dominated since its length is greater. Therefore, in particular,  $P_n$ is the \textit{cheapest} path, i.e., the lowest cost/weight path from $s$ to $t$.
\end{enumerate}
We're interested in characterizing, across all possible task graphs, the pure Nash equilibria, restricting attention to paths in $P_1, \ldots, P_n$. 

\begin{proposition}
    Let  $G$ be an arbitrary task graph. As above, let $P_1, \dots, P_n$ be a minimal set of (non-dominated) paths ordered  so $P_1$ is the quickest and $P_n$ the cheapest. Suppose $n \ge 3$. Then, path $P_i$, where $i > 1$, is a symmetric Nash equilibrium if 
    and only if $c(P_{i-1}) - c(P_i) \ge r/2$. $P_1$ is a symmetric Nash if and only if 
    $ c(P_1)- c(P_n) \le r/2$. There are no other pure Nash equilibria. Therefore, there can be either 0, 1 or 2 pure Nash equilibria.

    If $n = 2$, there is an additional asymmetric pure Nash equilibrium where one player plays the quickest path $P_1$ and the other plays the cheapest path $P_2$ if 
    $c(P_1) - c(P_2) = r/2$.
\end{proposition}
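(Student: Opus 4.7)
My plan is to handle the symmetric profiles by exhaustively analyzing the two kinds of unilateral deviations (to a faster, winning path or to a slower, losing path), and then to show that any asymmetric pure Nash must force the two players onto $P_{n-1}$ and $P_n$. The central enabling fact is the global bound $c(P_1) - c(P_n) \le r$, already established in the preceding discussion of dominated paths (otherwise $P_1$ would be dominated by $P_n$).

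For a symmetric equilibrium on $P_i$ with $i > 1$, each player earns $r/2 - c(P_i)$. A deviation to a faster path $P_j$ (with $j < i$) wins, giving $r - c(P_j)$, and is unprofitable precisely when $c(P_j) - c(P_i) \ge r/2$ for every $j < i$; by the strict monotonicity $c(P_1) > c(P_2) > \cdots > c(P_n)$ this collapses to the single constraint $c(P_{i-1}) - c(P_i) \ge r/2$. A deviation to a slower path $P_j$ (with $j > i$) loses, giving $-c(P_j)$, and similarly collapses to $c(P_i) - c(P_n) \le r/2$. The cleanest step is to notice that this loss constraint is automatic: the win constraint and $c(P_1) \ge c(P_{i-1})$ together give $c(P_1) - c(P_i) \ge r/2$, and then the global bound $c(P_1) - c(P_n) \le r$ yields $c(P_i) - c(P_n) \le r/2$. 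For $i = 1$ no faster paths exist, so only the loss constraint survives, giving $c(P_1) - c(P_n) \le r/2$.

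The more delicate part is the asymmetric case. Suppose one player plays $P_i$ and the other plays $P_j$ with $i < j$. The losing player would strictly gain by switching from $P_j$ to $P_{j+1}$ (cheaper and still losing) unless $j = n$, and the winning player would strictly gain by switching from $P_i$ to $P_{i+1}$ (cheaper and still winning because $|P_{i+1}| < |P_n|$) unless $i = n-1$. Hence any asymmetric Nash must be $(P_{n-1}, P_n)$, and the two tie-deviation constraints (each player considering the opponent's path) pin down $c(P_{n-1}) - c(P_n) = r/2$. When $n \ge 3$, the loser has an additional win deviation to the cheapest faster path $P_{n-2}$, requiring $c(P_{n-2}) - c(P_n) \ge r$; combined with the global bound $c(P_1) - c(P_n) \le r$ and strict cost monotonicity this rules out the asymmetric profile. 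When $n = 2$ no such extra deviation exists, and the single equation $c(P_1) - c(P_2) = r/2$ suffices. The main obstacle I expect is the bookkeeping in the asymmetric analysis: several deviation types must be rejected simultaneously, and the global bound $c(P_1) - c(P_n) \le r$ has to be invoked at precisely the right moments to convert the win constraints into loss constraints and to contradict $c(P_{n-2}) - c(P_n) \ge r$.
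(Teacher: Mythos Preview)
Your proof is correct and follows essentially the same line as the paper's, differing only in organization. Where you derive both the win and loss deviation constraints for a symmetric $P_i$ and then show the loss constraint is implied by the win constraint plus the global bound, the paper simply invokes property~1 (winning strictly beats losing for any pair of non-dominated paths) to discard loss deviations upfront; and where you reduce an arbitrary asymmetric profile to $(P_{n-1},P_n)$ by iterated cheapening, the paper leverages the best-response structure already established (the best response to $P_i$ is always $P_{i-1}$ or $P_i$) to restrict attention to profiles of the form $(P_{i-1},P_i)$ immediately.
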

\begin{proof}
    First, suppose $n \ge 3$ and the opponent picks $P_i$, where $i>1$. If the agent plays $P_{i-1}$, they get $u_w(P_{i-1})$. Since winning is always better than losing, the agent need not consider any $P_{> i}$. Similarly, since longer paths are cheaper, the agent will not take any $P_{< i-1}$. Thus, the only choices are between tying on $P_i$ or winning on $P_{i-1}$, so $P_i$ is a symmetric Nash exactly when $u_t(P_i) \ge u_w(P_{i-1})$, or equivalently $c(P_{i-1}) - c(P_i) \ge r/2$. Similarly, if the opponent plays $P_1$, the agent chooses between tying on $P_1$ or losing; in the latter case, losing on the cheapest path $P_n$ gives highest utility. The agents therefore have a symmetric Nash exactly when tying on $P_1$ is better than losing on $P_n$, or  $ c(P_1)- c(P_n) \le r/2$.

    The only strategy profile we haven't considered is $(P_{i-1}, P_i)$; when might this be a Nash equilibrium? Suppose the opponent plays $P_{i-1}$. Either $i -1 > 1$, and so the best response is $P_{i-2}$ or $P_{i-1}$, or $i-1=1$, and so the best response is $P_{1}$ or $P_n$. If $n \ge 3$, then $P_{2} \neq P_n$ and so $(P_{i-1}, P_i)$ is not an equilibrium. Otherwise, if $n = 2$, then $P_2 = n$ and so we get an asymmetric Nash when tying is dominated, i.e. $u_t(P_1) \le u_l(P_2)$ and $u_w(P_1) \ge u_t(P_2)$. This is the same as  $c(P_1) - c(P_2) = r/2$.
\qed
\end{proof}
If we apply the same reasoning when the tie-breaking rule is different, for example, when ties are resolved by both players getting the full reward, any $P_i$ is a (symmetric) pure Nash (and there are no other pure Nash). On the other hand, if ties result in neither player receiving any reward, the same reasoning implies that we only have a pure (asymmetric) Nash if $n=2$; if $n \ge 3$, there are no pure Nash.

We next turn our attention to the biased version of this problem. In the unbiased case, we could take a ``global'' view of the graph, and think about paths purely in terms of their overall length and cost. But when agents are biased, the actual structure of the path is very important; time-inconsistency means that agents look at paths \textit{locally}, not globally. It is thus very difficult to cleanly rule out dominated paths -- even paths with exponentially high cost may be taken, as we see next.

\section{Nash Equilibria to Elicit Optimal Behavior from Biased Agents}
We now assume that the agents are both naive, present biased agents, with shared bias parameter $b$.\footnote{ The homogenity of the agents is not particularly important to our results in this section. If the agents have different bias parameters, our results go through by setting $b$ equal to the larger of the two biases.} Our high-level goal is to show that competition convinces biased agents to take cheap paths, as unbiased agents do without competition. To this end, we show that a small amount of reward creates a Nash equilibrium on the cheapest path, first on the $n$-fan, and then for all graphs which have a \textit{dominant} path -- a cheapest path that is also the \textit{uniquely} quickest path.

\subsection{The \texorpdfstring{$n$}{n}-fan}
We focus first on the $n$-fan since all graphs with exponential cost ratio for naive agents have a large $k$-fan as a graph minor \citep{kleinberg2014naive}. For an example of the $n$-fan, refer back to \autoref{fig:n-fan}; note in particular that we assume the bias parameter $b > c > 1$. We show that, with a very small amount of reward, competition encourages optimal behavior in naive agents (and further, we fully characterize \textit{all} equilibria on the $n$-fan). To aid our proofs and discussions regarding the $n$-fan, we define $P_i$ as the path from $s$ to $t$ containing edge $(v_i, t)$. Define $P_0$ as the direct path $(s, t)$.
\begin{theorem}
    Let $G$ be an $n$-fan, and for any bias $b > c$, define $\eps = b - c$. Then, with a reward of $r \ge 2\eps$, there will be a Nash equilibrium on the optimal path, for two agents with bias $b$. Further, if $r \le 2 \eps c^{n-1}$, there is also a Nash equilibrium on the longest path. There are no other pure Nash equilibria, for any value of $r$.
\end{theorem}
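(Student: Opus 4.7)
My plan is to verify the two claimed equilibria by checking, at each vertex along the candidate path, that the naive agent's perceived cost (per equation~\eqref{eq:Cn}) of continuing is no greater than for any deviation. The $n$-fan's branching is very restricted---decisions occur only at $s$ (between the direct edge $(s,t)$ and the chain start $(s,v_1)$) and at each $v_i$ (between shortcut $(v_i,t)$ and continuing to $v_{i+1}$)---so this reduces to a finite case analysis. I work with the standard cost structure in which the shortcut $(v_i,t)$ has cost scaling as $c^i$ (which makes the delay inequality $b\cdot c^i > b\delta + c\cdot c^i$ exactly equivalent to $b>c$), chain edges have negligible cost $\delta$, and $P_0=(s,t)$ has cost $1$. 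Then the cheapest (optimal) path is $P_0$ and the longest is $P_n$; at the end I take $\delta\to 0$ to recover the clean thresholds.

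For the optimal-path case, fix both agents on $P_0$ and consider the only nontrivial deviation, at $s$. A naive agent who takes $(s,v_1)$ plans to pick the utility-maximizing continuation from $v_1$; every such path has length $\geq 2$ and so loses against $P_0$ (length $1$), yielding zero reward, and the naive agent accordingly plans the cheapest option $(v_1,t)$. Comparing its perceived cost $b\delta + c$ against the tie utility $b - r/2$ on $P_0$, and taking $\delta\to 0$, gives precisely $r \geq 2\eps$. For the longest-path case, fix both agents on $P_n$ and inspect each chain vertex. For $i \leq n-2$, both the shortcut $(v_i,t)$ and the naive ``continue to $v_{i+1}$ then exit'' plan are perceived to beat the opponent's length-$(n+1)$ path, so rewards cancel in the perceived-cost comparison and the usual procrastination inequality $b\delta < \eps c^i$ (trivial for small $\delta$) keeps the agent on the chain. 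The binding constraint occurs at $v_{n-1}$, where continuing only ties while shortcutting wins; the resulting inequality is $r/2 < \eps c^{n-1} - b\delta$, giving $r \leq 2\eps c^{n-1}$. Finally, the $s$-deviation to $(s,t)$ is already blocked by $b>c$ since the naive agent at $s$ anticipates winning cheaply by exiting at $v_1$.

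To rule out all other pure Nash, for each candidate $P_k$ with $1 \leq k \leq n-1$ I would extract two necessary constraints: at $s$, the agent must not deviate to $(s,t)$ (which would win); at $v_k$, the agent must not continue to $v_{k+1}$ (which procrastination-bias naturally pushes them toward). The first forces $r$ to be at most roughly $2\eps$ while the second forces $r$ to be at least roughly $2\eps c^k$; since $c>1$ these are incompatible, eliminating the equilibrium. The main technical obstacle, and the place where bookkeeping matters most, is correctly solving the inner minimization in~\eqref{eq:Cn} at every intermediate vertex: the naive agent's ``planned'' continuation depends on whether the remaining paths win, tie, or lose against the opponent, and this classification changes as one moves along the graph and across candidate deviations. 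Keeping track of which continuation the naive agent plans (and hence what perceived cost they assign to each action) consistently across all cases, while carefully taking the $\delta \to 0$ limit to produce the stated thresholds, is the heart of the proof.
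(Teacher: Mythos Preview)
Your verification of the two claimed equilibria (on $P_0$ and on $P_n$) is correct and matches the paper's argument; the introduction of a chain-edge cost $\delta$ and the limit $\delta\to 0$ is unnecessary (the paper's $n$-fan has chain edges of cost exactly $0$), but it does no harm.

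The gap is in your plan for ruling out $P_k$ with $1\le k\le n-1$. Your second constraint, at $v_k$, is correct and yields $r\ge 2\eps c^k$. But your first constraint, ``at $s$ the agent must not deviate to $(s,t)$, forcing $r$ at most roughly $2\eps$,'' only works when $k=1$. For $k\ge 2$ it gives \emph{no} restriction on $r$: if the opponent is on $P_k$ (length $k+1\ge 3$), then from $s$ the naive agent who moves to $v_1$ plans to exit via $(v_1,t)$, producing a path of length $2<k+1$ that \emph{also wins}. Thus both options at $s$ earn the full reward $r$, the reward terms cancel in the comparison $C_n(s,t)=b-r$ versus $C_n(s,v_1)=c-r$, and the agent prefers $v_1$ automatically because $b>c$, regardless of $r$. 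So you cannot extract an upper bound on $r$ from the vertex $s$ when $k\ge 2$, and your intended contradiction with $r\ge 2\eps c^k$ disappears.

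The fix is to move the first constraint from $s$ to $v_{k-1}$, which is exactly what the paper does. At $v_{k-1}$ the shortcut $(v_{k-1},t)$ wins (perceived cost $bc^{k-1}-r$) while the naive plan of continuing to $v_k$ and then exiting only ties (perceived cost $c^k-r/2$); requiring the agent not to deviate here gives $r\le 2\eps c^{k-1}$, which is indeed incompatible with $r\ge 2\eps c^k$ since $c>1$. In other words, your two-constraint idea is the right one, but the binding ``upper'' constraint lives at $v_{k-1}$, not at $s$.
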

\begin{proof}
    We first show that $r \ge 2\eps$ guarantees the existence (not uniqueness) of an optimal pure Nash equilibrium. Suppose $A_2$ takes the optimal path $P_0$. While standing at $s$, $A_1$ evaluates the cost to $t$ as $b - r/2$, i.e. $C_n(s, t) = b-r/2$. Similarly, $C_n(s, v_1) = c$, so with $r \ge 2\eps$, $A_1$ (weakly) prefers to go directly to $t$. Using a similar calculation, we can see that when $r \le 2 \eps c^{n-1}$, there is a Nash equilibrium on the longest path, $P_n$. This result is somewhat surprising -- even with an extremely high reward, the naive agents may take the longest path. To see why, note that at any vertex $v_i$, the naive agent is only comparing two options, going to $t$ right now or waiting one step and going to $t$ (since they naively believe they will behave optimally in the future). Further, when $A_2$ takes the longest path, $A_1$ gets the same reward from $(v_{i}, t)$ and $(v_{i+1}, t)$. Thus, just as the agent prefers to procrastinate in the $n$-fan, they similarly procrastinate here until $v_{n-1}$, where the reward must be very large for them to deviate from the longest path.

    To show that there are no other Nash equilibria, suppose $A_2$ takes $P_i$, where $1 \le i < n$. As we explained above, since $A_1$ gets the same \textbf{reward} from paths $P_1$ to $P_{i-1}$ and the same reward from paths $P_{i+1}$ to $P_n$, the agent will always procrastinate to at least $v_{i-1}$, and if he chooses to go past $v_i$, he will procrastinate until $v_n$. Now, suppose $A_1$ is standing at $v_{i-1}$. He will take path $P_{i-1}$ if $bc^{i-1} - r < c^i - r/2$, which implies that $r > 2\eps c^{i-1}$.
    Either the reward satisfies this, which shows that $P_i$ is not a Nash equilibrium, or $r \le 2\eps c^{i-1}$, and $A_1$ continues to $v_i$. Suppose he continues to $v_i$. Then, he (weakly) prefers $P_{i}$ if $ bc^i - r/2 \le c^{i+1}$, which implies that $r \ge 2\eps c^i$.
    However, $r \le 2\eps c^{i-1} < 2\eps c^i$, so $r$ does not satisfy this. Thus, $A_1$ prefers to procrastinate until $v_n$. In summary, if $A_2$ takes path $P_i$, $A_1$ either prefers $P_{i-1}$ or $P_n$ (depending on the reward). The same argument applies to $A_2$, so there are no Nash equilibria on paths $P_1, \dots, P_{n-1}$, regardless of $r$.
\qed
\end{proof}
The ability to enable optimal behavior with just a constant reward stands in striking contrast to the setting where agents can abandon the task at any time. With neither competition nor internal rewards, the designer would have to place an exponentially high reward on $t$ (i.e. $r\ge c^n$) for biased agents to finish the task.\footnote{However, with internal \textit{edge} rewards, the designer could spend the same total reward to motivate two agents by directly adding a reward to the $(s, t)$ edge. This equivalence doesn't hold in all graphs, as we show in the next section.}

\subsection{Graphs with an Unbiased Dominant Strategy}
We now generalize our results to a larger category of graphs.   To focus exclusively on the irrationality of present bias rather than the optimization problem of choosing between cheap, long paths and short, expensive paths, we focus on graphs with a \textit{dominant path} -- a cheapest path\footnote{There may be other cheapest paths which are longer.} that is also the \textit{uniquely} quickest path. This is the case in the $n$-fan. In this setting, the problem is trivial for unbiased agents; simply take this dominant path. But for biased agents, the problem is still interesting; as the example of the $n$-fan shows, they may take paths that are exponentially more costly than the dominant path. However, we prove that a small amount of competition and reward suffices to ensure the existence of a Nash equilibrium where both agents take the dominant path.

\begin{theorem}\label{thm:neDom}
    Suppose $G$ is a task graph that has a \emph{dominant} path, $O$. Then, a reward of $r \ge 2b \cdot \max_{e \in O}c(e)$ guarantees a Nash equilibrium on $O$, for two agents with bias $b$.  
\end{theorem}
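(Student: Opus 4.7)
The plan is to exploit symmetry: by showing that if $A_2$ commits to $O$ then $A_1$'s naive best response is also $O$, we obtain a Nash equilibrium on $O$. Concretely, I would verify that at every vertex $u \in O$ with successor $v$ on $O$, the naive perceived cost satisfies $C_n(u,v) \le C_n(u,w)$ for every other out-neighbor $w$ of $u$; if this holds, then as the naive agent follows its one-step rule from $s$, it stays on $O$ throughout.

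First I would extract two structural consequences from $O$ being dominant. Since $O$ is the uniquely quickest path from $s$ to $t$, a subpath argument shows that for any $u \in O$ the remainder of $O$ from $u$ is uniquely quickest from $u$ to $t$; in particular, any path from $u$ to $t$ whose first edge is not $(u,v)$ has strictly greater length, so if $A_1$ deviates to an off-path neighbor $w$, every continuation forces $A_1$'s total path length to exceed $|O|$ and $A_1$ loses the race. Since $O$ is also a cheapest path, the same type of subpath argument shows that the remainder of $O$ from $v$ is a cheapest path from $v$ to $t$, and more generally $c(u,v) + c(P_O) \le c(u,w) + c(P^*_w)$, where $P_O \in \sP(v \to t)$ is the remainder of $O$ from $v$ and $P^*_w \in \sP(w \to t)$ is a cheapest path from $w$ to $t$.

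With these facts in hand I would evaluate both perceived costs directly. For the on-path edge, the naive agent's inner minimization $\min_{P_v}[c(P_v) - R_{A_2}(\cdot)]$ is realized by $P_O$: it is the unique way to tie with $A_2$ (reward $r/2$), while every other $P_v$ strictly loses (reward $0$) and has cost at least $c(P_O)$ by cheapness, so yields a larger value. Thus $C_n(u,v) = b\,c(u,v) + c(P_O) - r/2$. For any off-path neighbor $w$ the agent loses on every continuation, so the inner minimization just picks $P^*_w$, giving $C_n(u,w) = b\,c(u,w) + c(P^*_w)$.

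The final step is a short algebraic comparison: substituting $c(P^*_w) \ge c(u,v) + c(P_O) - c(u,w)$ into $C_n(u,w)$ and simplifying reduces the desired $C_n(u,v) \le C_n(u,w)$ to $(b-1)(c(u,v) - c(u,w)) \le r/2$. Since $b > 1$ and $c(u,w) \ge 0$, this is implied by $r \ge 2(b-1)\,c(u,v)$, and hence by the hypothesis $r \ge 2b\max_{e \in O} c(e)$. The main obstacle is the careful verification that the naive agent's inner minimization from $v$ really picks $P_O$ rather than some alternative continuation; this is where both halves of the dominance hypothesis are used -- uniqueness of the quickest path rules out off-$O$ tying continuations from $v$, and cheapness of $O$ rules out any off-$O$ losing continuation that could still beat $c(P_O) - r/2$ in perceived value.
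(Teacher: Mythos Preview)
Your proposal is correct and follows essentially the same approach as the paper's proof: assume $A_2$ takes $O$, use dominance to reduce the perceived cost to $b\,c(u,v)+d(v)-\tfrac{r}{2}\mathbf{1}\{\text{on }O\}$, and then use the cheapness inequality $c(u,v)+c(P_O)\le c(u,w)+c(P^*_w)$ to bound the required reward by $2b\,c(u,v)$ at each step. Your write-up is in fact slightly more careful than the paper about justifying the inner minimization at $v$ (explicitly invoking unique quickness to exclude off-$O$ ties and wins), and it extracts the marginally sharper constant $2(b-1)\max_{e\in O}c(e)$, but the overall structure and key ideas coincide.
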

\begin{proof}
    Assume that $A_2$ takes $O$. Recall that a biased agent perceives the remaining traversal cost of going from $v$ to $t$ as:
    \begin{align*}
        C_n(u, v) = b \cdot c(u, v) + \min_{P_v} c(P_v) - R_{A_2}(P_{s\to u, v\to t})
    \end{align*}
    We know that for any vertex $v^*$ on the dominant path, the path that minimizes the second term is just the fragment of the dominant path from $v^* \to t$ (it is both the quickest and cheapest way to get from $v^*$ to $t$). Further, any deviation from the dominant path results in no reward. So, for any $v$ not on the dominant path, the path that minimizes the second term is again the cheapest path from $v \to t$. Thus, the cost equation simplifies to:
    \begin{align*}
        C_n(u, v) = b \cdot c(u, v) + d(v) - r/2\cdot \mathbf{1}\{D\},
    \end{align*}
    where $d(v)$, the \textit{distance} from $v \to t$, denotes the cost of the cheapest path from $v$ to $t$ (ignoring rewards) and $\mathbf{1}\{D\}$ is simply an indicator variable that's $1$ if the agent has not deviated from the dominant path. 
    
    Now, let $O = (s=v_0^*, v_1^*, v_2^*, \dots, t=v_l^*)$ be the dominant path and suppose $A_2$ takes this path. In order for $A_1$ to choose $O$, we require, for all $i$:
    \begin{align*}
        &S(v_i^*) = v_{i+1}^* \\
        \iff& v_{i+1}^* = \argmin_{v: (v_i^*, v) \in E} b \cdot c(v_i^*, v) + d(v) - r/2\cdot \mathbf{1}\{D\} \\
        \iff& \forall v: (v_i^*, v) \in E, bc(v_i^*, v) + d(v) \ge bc(v_i^*, v_{i+1}^*) + d(v_{i+1}^*) - r/2
    \end{align*}
    Now, let $i$ be arbitrary, let $v \neq v_i^*$ be an arbitrary neighbor of $v_i^*$, and for ease of notation, let $c = c(v_i^*, v), c^* = c(v_i^*, v_{i+1}^*), d = d(v)$, and $d^* = d(v_{i+1}^*)$. Then, we get the following bound on the reward: $r/2 \ge b(c^* - c) + d^* - d$.
    To get a rough sufficient bound, notice that $c+d \ge c^* + d^*$, since $O$ is the cheapest path. This implies that $bc^* > b(c^*-c) + d^* - d$. Thus, it suffices to set $r \ge 2bc^*$ in order to ensure $S(v_{i}^*) = v_{i+1}^*$. Repeating this argument for all $i$, we see that a sufficient reward is $r = 2b \cdot \max_{e \in O} c(e)$. 
\qed
\end{proof}
One might object to our claim that $r = 2b \max_{e \in O} c(e)$ is ``small''. To calibrate our expectations, notice that we can view this problem as an agent trying to pick between several options (i.e. paths), each with their own reward and cost structure. We want to convince the agent to pick one particular option -- namely, the cheapest path. But it would be unreasonable to expect that a reward significantly smaller than the cost of \textit{any} option would sway the agent's decision. Our theorem above shows that a reward that is at most proportional to the \textit{cheapest} option suffices; and in many cases the reward is only a fraction of the cost of the optimal path (e.g. when the optimal path has balanced cost among many edges).

For a point of comparison, even internal edge rewards (which required the same reward budget as competitive rewards for the $n$-fan) can require $O(n)$ times as much reward in some instances. To see the intuition, notice that internal edge rewards must be applied at every step where the agent might want to deviate. The agent also immediately ``consumes'' this reward; it doesn't impact his future decision making. However, the competitive reward is ``at stake'' whenever the agent considers deviating; this reward can sway the agent's behavior without being immediately consumed. For a concrete example, see \autoref{fig:comp_less_reward}.
\begin{figure}[t]
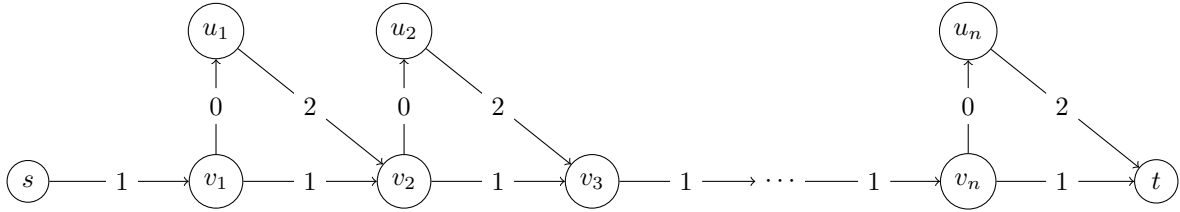

    \centering
    \tikzfig{comp_reward_cheaper}
    \caption{A graph with many suboptimal deviations. For an agent with bias $b > 2$, a designer with access to only edge rewards must spend $O(n)$ total reward for optimal behavior ($b-2$ on each $(v_i, v_{i+1})$ edge). In our competitive setting, only $2(b-2)$ total reward is required.}
    \label{fig:comp_less_reward}
\end{figure}

\subsection{Increasing the Number of Competitors}

A very natural extension to this model would involve more than 2 agents competing. The winner takes all, and ties are split evenly among those who tied. However, this modification doesn't change much \emph{when trying to get a Nash equilibrium on the dominant path}. The only change is that if $m$ agents are competing, the reward needed is $O(m)$, as a single agent will get a $1/m$ fraction of the reward in a symmetric equilibrium. This is true because there is no way for any agent to beat the dominant path, and claim the entire $O(m)$ reward for themselves. So if the reward is scaled appropriately (i.e. in \autoref{thm:neDom} set $r \ge mb \cdot \max_{e \in O} c(e)$), we will still guarantee a Nash equilibrium. Put another way, the \textit{per-agent} reward needed for a Nash equilibrium on the dominant path does not change as the number of competitors varies.

\section{General Nash Equilibria}
In this section, we describe a polynomial time algorithm that, given an arbitrary graph $G$, path $Q$ and bias $b$, determines if $Q$ can be made a Nash equilibrium, and if so, the minimum required reward to do so. Finding and using this minimum required reward will generally cost much less than  the bound given by \autoref{thm:neDom}. Moreover, this algorithm does not assume the existence of a dominant path. We start by describing how time-inconsistency defeats the intuition that higher rewards cause agents to prefer quicker paths. We then present an algorithm which computes the minimum $r$ that results in $Q$ being a symmetric Nash equilibrium.\footnote{In fact, the algorithm will return a set of $O(n)$ disjoint intervals containing \textit{all} values of $r$ that result in $Q$ being a symmetric Nash equilibrium.}

\subsection{Higher Rewards Need Not Encourage Quicker Paths}
The proof of  \autoref{thm:neDom} suggests the following algorithm for this problem. Start with a reward of $0$, and step along each vertex $u \in Q$, increasing the reward by just enough to ensure $A_1$ stays on $Q$ for one more step (assuming $A_2$ is taking $Q$). 
However, if $A_1$ wants to deviate onto a quicker/tied path at any point, return $\bot$; decreasing the reward would cause them to deviate earlier, and, intuitively, it seems that increasing the reward could not cause them to switch back to $Q$ from the quicker/tied path. 
After one pass, simply pass through again to ensure that the final reward doesn't cause $A_1$ to deviate early on. The following lemma shows that this algorithm is tractable, by showing that we can compute the minimum reward required for $A_1$ to stay on $Q$ at any step (and determine whether $A_1$ wants to deviate onto a quicker path).

\begin{lemma}\label{lm:bf}
    Assuming that $A_2$ takes path $Q$, $A_1$ can efficiently compute $\min_{P_v} c(P_v) - R_{A_2}(P_{s\to u,v \to t})$ by considering the cheapest path (from $v\to t$), the cheapest path where $A_1$ ties $A_2$, and the cheapest path where $A_1$ beats $A_2$. (Some of these paths may coincide, and at least one must exist).
\end{lemma}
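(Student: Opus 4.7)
The plan is to partition the paths $P_v$ from $v$ to $t$ by their outcome category against $A_2$'s path $Q$, exploiting the fact that the reward $R_{A_2}$ takes a single fixed value within each category.

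Let $k = |Q|$ and write $j = |P_{s \to u}| + 1$ for the length of $A_1$'s planned path up through $(u,v)$. Then $P_v$ produces a win if $|P_v| < k-j$, a tie if $|P_v| = k-j$, and a loss otherwise, with reward $r$, $r/2$, and $0$ respectively. Within any single category, minimizing $c(P_v) - R_{A_2}$ reduces to minimizing $c(P_v)$ subject to the length constraint defining that category. Hence the per-category minima are witnessed by three canonical paths: $P^W$, the cheapest $v \to t$ path of length at most $k-j-1$; $P^T$, the cheapest $v \to t$ path of length exactly $k-j$; and $P^L$, the cheapest $v \to t$ path of length at least $k-j+1$. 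The overall minimum is $\min\{c(P^W) - r,\ c(P^T) - r/2,\ c(P^L)\}$, ignoring non-existent candidates.

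The next step is to replace $P^L$ with the unconstrained cheapest $v \to t$ path $P^*$, as in the lemma statement. By definition $c(P^*) \le c(P^L)$, and since $R_{A_2}(P^*) \ge 0$, we have $c(P^*) - R_{A_2}(P^*) \le c(P^L)$; moreover, if $P^*$ itself falls into the winning or tying category then $c(P^*) - R_{A_2}(P^*)$ already coincides with $c(P^W) - r$ or $c(P^T) - r/2$. Thus taking the minimum of $c(P^*) - R_{A_2}(P^*)$, $c(P^T) - r/2$, and $c(P^W) - r$ matches the true minimum over all $P_v$. At least $P^*$ exists because $v$ is assumed to lie on some path to $t$.

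Finally, each candidate can be computed in polynomial time on the DAG: $P^*$ is a standard shortest path, while $P^T$ and $P^W$ follow from a dynamic program that stores, for every vertex $x$ and every length $\ell \le |V|$, the cheapest cost of a $v \to x$ path using exactly $\ell$ edges, filled in topological order in $O(|V|\cdot|E|)$ time overall. The main subtlety worth emphasizing is why the unrestricted cheapest path can substitute for a dedicated ``cheapest losing path'' candidate: a losing path forgoes the reward entirely, so the cost advantage of $P^*$ over any losing path always dominates whatever reward $P^*$ itself might earn.
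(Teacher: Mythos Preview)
Your argument is correct and follows essentially the same approach as the paper: partition by outcome (win/tie/lose), observe that the reward is constant within each class so only the cheapest representative of each matters, and then verify that replacing the ``cheapest losing path'' by the unrestricted cheapest path leaves the minimum unchanged. The only cosmetic differences are that the paper works directly with the nested relaxations $c_\infty,\,c_{\le k},\,c_{<k}$ (cheapest with any length, at most $k$, strictly less than $k$) rather than your exact-length tie class, and simply writes $c_\infty$ without subtracting $R_{A_2}(P^*)$; both formulations give the same minimum for the reasons you identify, and the paper's version has the mild practical advantage that $c_{\le k}$ is exactly what one pass of Bellman--Ford returns.
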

\begin{proof}
    If $A_1$ assumes that $A_2$ takes path $Q$, then the reward is simply $r$ if the path under consideration wins, $r/2$ if it ties, and $0$ if it loses. Thus, the minimizer must be the cheapest paths in one of these three cases (and the cheapest $v\to t$ path always exists; the other two may not). Specifically, let $k$ be the number of edges from $v \to t$ on $Q$, and let $c_\infty, c_{\le k}, c_{< k}$ denote the cost of the cheapest $v \to t$ paths with any number of edges, at most $k$ edges, and strictly less than $k$ edges respectively. Then, $\min_{P_v} c(P_v) - R_{A_2}(P_{s\to u,v \to t}) = \min(c_\infty, c_{\le k} - r/2, c_{< k} - r)$. These paths can be efficiently computed using the Bellman-Ford algorithm to determine the cheapest path with at most $k$ edges for any $k$.
\qed
\end{proof}
Unfortunately, while tractable, the approach described above does not yield a correct algorithm. This is because it relies implicitly on the following two properties, which formalize the intuition that increasing the reward causes agents to favor quicker paths.
\begin{property}\label{prop1}
    If a reward $r$ guarantees a Nash equilibrium on some path $Q$, any reward $r' > r$ will either (a), still result in a Nash equilibrium on $Q$, or (b), cause an agent to deviate to a \textit{quicker} path $Q'$.
\end{property}
\begin{property}\label{prop2}
    If $A_1$ deviates from $Q$ onto a quicker/tied path for some reward $r$, increasing the reward will not cause them to follow $Q$.
\end{property}
Both properties are intuitive -- if we increase the reward, that should motivate the agent to take a path that beats their opponent. And vice versa -- increasing the reward shouldn't cause them to shift onto a slower path or shift between equal length paths. But surprisingly, both properties are false, as the following examples demonstrate.
\begin{figure}[H]
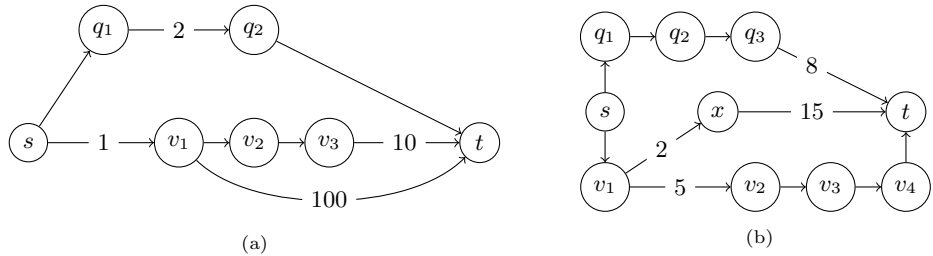

    \centering
    \begin{subfigure}{0.4\textwidth}
        \centering
        \tikzfig{alg_ce}
        \caption{}
    \end{subfigure}
    \begin{subfigure}{0.4\textwidth}
        \centering
        \tikzfig{alg_ce_2}
        \caption{}
    \end{subfigure}
    \caption{Graphs which do not exhibit the two expected properties. Unlabeled edges have cost $0$.}
    \label{fig:alg_ce}
\end{figure}
For \autoref{prop1}, consider the graph in \autoref{fig:alg_ce}(a) and define paths $Q = (s, q_1, q_2, t)$, $V = (s, v_1, v_2, v_3, t)$, and $X = (s, v_1, t)$. Suppose both agents have bias $10$ and that $A_2$ takes $Q$. Then a reward of $1$ guarantees that $A_1$ takes $Q$, as the optimal path from $v_1 \to t$ would follow $V$. However, if $r = 300$, the optimal path from $v_1 \to t$ follows $X$. So, $C_n(s, q_1) = -148 > C_n(s, v_1) = -200$ and so the agent goes to $v_1$.
But at $v_1$, the perceived cost of $(v_1, t)$ is actually $1000$, and so the agent prefers to take path $V$. Thus, increasing the reward from $1$ to $300$ causes the agent to deviate from $Q$ onto a slower path!

For \autoref{prop2}, consider the graph in \autoref{fig:alg_ce}(b) and define paths $Q, V$, and $X$ in the obvious manner. Again, suppose both agents have bias $10$ and that $A_2$ takes $Q$. Then, with a reward of $10$, the optimal path from $v_1 \to t$ would follow $V$. So, $C_n(s, v_1) = 5 > C_n(s, q_1) = 8 - 5 = 3$, and $A_1$ goes to $q_1$ and thus takes $Q$. If $r = 2$, the optimal path from $v_1 \to t$ still follows $V$. But now $C_n(s, v_1) = 5 < C_n(s, q_1) = 8-1 = 7$. Thus, the agent goes to $v_1$. And here, with $b = 10$, deviating to $X$ is more attractive than remaining on $V$, and thus the agent takes $X$. So, although $A_1$ deviates from $Q$ to a quicker path for reward $2$, they remain on $Q$ with reward $10$.

To summarize, \autoref{prop1} fails because present biased agents can take slower paths than they planned and \autoref{prop2} fails because present biased agents can take quicker paths than they planned. In other words, while higher rewards do \textit{tempt} agents to take quicker paths, and lower rewards tempt agents towards cheaper paths, their time inconsistency may make them do the opposite.

\subsection{The Algorithm}
We now present an algorithm which finds the set of rewards which induce a symmetric Nash equilibrium on a path $Q$ (\autoref{alg:ne}).
At a high level, the algorithm narrows down the set $\mathcal{I}^*$ of \textit{feasible} rewards (rewards that ensure that $Q$ is a Nash equilibria) by computing the set $\mathcal{I}_v$ of rewards that ensure that $A_1$ takes $(u, v)$ for every $(u, v) \in Q$. The key idea is that we can efficiently compute all $r$ that ensures that $A_1$ prefers $(u, v)$ over $(u, v')$ by splitting into cases based on whether the optimal paths from $v \to t$ and $v' \to t$ involve winning, tying, or losing. We now prove the correctness and runtime of the algorithm.
\begin{theorem}
    \autoref{alg:ne} returns the minimum $r$ that ensures that $Q$ is a Nash equilibrium, or $\bot$ if no such $r$ exists. Further, it runs in polynomial time.
\end{theorem}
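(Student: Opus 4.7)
\medskip
\noindent\textbf{Proof proposal.}
The plan is to frame the algorithm as a parametric computation: treat $r$ as a free variable and, for each vertex $v_i^* \in Q$ and each off-path neighbor $v'$ of $v_i^*$, describe the set of $r$'s for which $A_1$'s naive successor at $v_i^*$ is the next vertex $v_{i+1}^*$ of $Q$ rather than $v'$. Let $\mathcal{I}_{i,v'}$ denote that set. By the definition of a symmetric Nash, $Q$ is an equilibrium under reward $r$ if and only if the perceived-cost inequality $C_n(v_i^*, v_{i+1}^*) \le C_n(v_i^*, v')$ holds at every $v_i^* \in Q$ and every off-path neighbor $v'$ (with appropriate tie-breaking to match the algorithm's convention). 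Hence the set of feasible rewards is $\mathcal{I}^* = \bigcap_{i,v'} \mathcal{I}_{i,v'}$, and the algorithm's job is to compute this set and return $\min \mathcal{I}^*$ (or $\bot$ if it is empty).

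The key structural observation is that, by \autoref{lm:bf}, assuming $A_2$ plays $Q$, the quantity $\min_{P_v} c(P_v) - R_{A_2}(P_{s\to u,v\to t})$ equals $\min(c_\infty(v), c_{\le k}(v) - r/2, c_{<k}(v) - r)$, where $k = |Q| - i - 1$ is the deadline for beating $A_2$ from $v$. Consequently, $C_n(v_i^*, v)$ is a concave piecewise-linear function of $r$ with at most three pieces, of slopes $0$, $-1/2$, and $-1$, whose breakpoints are determined by $c_\infty(v)$, $c_{\le k}(v)$, and $c_{<k}(v)$. For any two neighbors $v$ and $v'$ of $v_i^*$, the comparison $C_n(v_i^*, v) \le C_n(v_i^*, v')$ therefore reduces to comparing two concave piecewise-linear functions of $r$, yielding a set of $r$'s that is a union of $O(1)$ intervals whose endpoints are rational expressions in the edge weights and bias $b$. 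Taking an intersection over all $O(\deg(v_i^*))$ neighbors and all $i$ gives an $\mathcal{I}^*$ that is a union of at most $O(|E|)$ intervals, and the algorithm extracts its infimum in sorted order.

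For correctness, one checks that this characterization is both necessary and sufficient: necessary because at every $v_i^* \in Q$ the naive agent must literally prefer the on-path edge to every alternative in order to reproduce $Q$; sufficient because the inequalities stated only depend on $A_2$'s commitment to $Q$, which the symmetric Nash assumption guarantees. The fact that \autoref{prop1} and \autoref{prop2} fail is precisely what forces us to compute $\mathcal{I}^*$ as a union of intervals rather than as a single upward- or downward-closed set — but the parametric min-of-three description sidesteps this difficulty, since the failures of monotonicity correspond exactly to the transitions between the three affine pieces of $C_n(v_i^*, v)$, which the algorithm already tracks. For runtime, all quantities $c_\infty(v)$, $c_{\le k}(v)$, $c_{<k}(v)$ over all $v$ and $k \le |Q|$ can be tabulated in polynomial time by Bellman-Ford (as argued in \autoref{lm:bf}); each of the $O(|E|)$ comparisons then yields its $O(1)$ intervals in constant time, and the final intersection and minimization over $O(|E|)$ intervals is also polynomial.

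The main obstacle I expect is carefully handling the piecewise structure when deriving $\mathcal{I}_{i,v'}$: both sides of $C_n(v_i^*, v_{i+1}^*) \le C_n(v_i^*, v')$ switch affine pieces at potentially different breakpoints, so one must case-split on which piece each side is in and verify the case boundaries match up. A secondary subtlety is correctly handling ties and the convention that $A_1$ chooses $v_{i+1}^*$ at equality (so that $Q$ is realized), and ensuring the on-path edge $(v_i^*, v_{i+1}^*)$ is also evaluated with the $r/2$ term — these are bookkeeping details rather than conceptual obstacles, but they must be dispatched before the interval arithmetic above is valid.
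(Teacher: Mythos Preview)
Your proposal is correct and follows essentially the same approach as the paper: both use \autoref{lm:bf} to express $C_n(u,v)$ as $\min(c_\infty, c_{\le k}-r/2, c_{<k}-r)$, observe that this makes each perceived cost a piecewise-linear function of $r$ with at most three pieces, reduce each on-path/off-path comparison to a union of $O(1)$ intervals via a case split on the affine pieces, and then intersect these across all $(u,v')$ pairs with Bellman--Ford dominating the runtime. Your phrasing in terms of concave piecewise-linear functions is a slightly cleaner packaging of what the paper does by explicitly partitioning $[0,\infty)$ at the breakpoints $r_1,r_2,s_1,s_2$ into at most five subintervals, but the content is identical.
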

\begin{proof}
    Assume that $A_2$ takes $Q$. The correctness of the algorithm follows from this lemma:
    \begin{lemma}
        Let $(u, v) \in Q$ and $(u, v' \neq v) \in G$ be arbitrary. $\mathcal{I}_{v,v'}$ contains the set of all rewards that ensure that $A_1$ prefers $(u, v) \in Q$ over $(u, v') \notin Q$ when standing at $u$.
    \end{lemma}
    \begin{proof}
        Suppose that the length of the path from $u$ to $t$ in $Q$ is $k$.
        From \autoref{lm:bf}, we know that $C_n(u, v) = \{c_\infty, c_{\le k} - r/2, c_{< k} - r\}$. Note that $c_{< k}$ represents the cheapest way to win, $c_{\le k}$ the cheapest way to win or tie, and $c_\infty$ the cheapest way to win, tie, or lose (i.e. the cheapest path). So, $c_\infty \le c_{\le k} \le c_{< k}$. $d_\infty, d_{\le k}$, and $d_{< k}$ are defined similarly with respect to $v'$. Without loss of generality, assume that all these values exist (i.e. that one \textit{can} win, tie, or lose from $v$ and $v'$); if this isn't the case, then all values which rely on them can be safely removed.

        Now, define $c^*(r) = \min(c_\infty, c_{\le k} - r/2, c_{< k} - r)$. This function $c^*(r)$ is simply the piecewise combination of three linear functions; $r_1 = 2(c_{\le k} - c_\infty)$ and  $r_2 = 2(c_{< k} - c_{\le k})$ represent the switching points. Define $d^*(r), s_1$ and $s_2$ similarly. Let $\mathcal{I}$ be the pairwise intersection\footnote{To obtain the pairwise intersection of two sets of intervals, simply take the Cartesian product of the two sets to obtain pairs, and then take the intersection within each pair to obtain a new set of intervals.} $\{[0,r_1], [r_1, r_2], [r_2, \infty]\} \times^\cap \{[0,s_1], [s_1, s_2], [s_2, \infty]\}$. Then, notice that for any $I \in \mathcal{I}$, $d^*(r) - c^*(r)$ has a closed form over $r \in I$. Since $A_1$ prefers $(u, v)$ over $(u, v')$ if $bc(u, v') + d^*(r) > bc(u, v) + c^*(r)$, this expression can be solved for $r$ in the interval $I$, yielding a single (sub)interval. Let $I_r$ denote this subinterval. Since $\mathcal{I}$ is a partition of all possible $r$, this means that $\mathcal{I}_{v, v'}$, the union of all intervals $I_r$, will contain all $r$ which ensure that $A_1$ prefers $(u, v)$ to $(u, v')$.
    \qed
    \end{proof}
    With this lemma, the proof of correctness is simple. $S(u) = v$ if and only if $r$ is in all $\mathcal{I}_{v, v'}$, which is what $\mathcal{I}_v$ tracks. And thus $A_1$ sticks with $Q$ if and only if $r$ is in all $\mathcal{I}_v$, which is exactly what $\mathcal{I}^*$ tracks.

    To prove that the algorithm is efficient, notice that $\mathcal{I}_{v, v'}$ contains at most $5$ intervals since $|\mathcal{I}| \le 5$. Further, the pairwise intersection of two sets of intervals can be computed in time $O(n+m)$ and is of size $\le n+m$, where $n$ and $m$ are the size of the input sets. Thus, all the interval intersections are efficient; in general, the runtime of the algorithm is dominated by the Bellman-Ford subroutine.
\qed
\end{proof}
\begin{algorithm}[bt]
    \KwIn{A DAG $G$, with source $s$ and sink $t$, path $Q$ in $G$, and bias factor $b$}
    \KwOut{The minimum reward $r$ for $Q$ to be a Nash equilibrium, or $\bot$ if not possible.}
    $k \gets \len(Q)$ \\
    $\mathcal{I}^* = \{[0, \infty]\}$ \\
    \ForEach{$(u,v) \in Q$}{
        $k \gets k-1$ \\
        $c_\infty, c_{\le k}, c_{< k} \gets \textsf{Bellman-Ford}(v, \infty), \textsf{Bellman-Ford}(v, k), \textsf{Bellman-Ford}(v, k-1)$ \\
        $r_1, r_2 \gets 2(c_{\le k} - c_\infty), 2(c_{< k} - c_{\le k})$\\
        $\mathcal{I}_v = \{[0, \infty]\}$ \\
        \ForEach{$(u, v' \neq v) \in G$} {
            $d_\infty, d_{\le k}, d_{< k} \gets \textsf{Bellman-Ford}(v', \infty), \textsf{Bellman-Ford}(v', k), \textsf{Bellman-Ford}(v', k-1)$ \\
            $s_1, s_2 \gets 2(d_{\le k} - d_\infty), 2(d_{< k} - d_{\le k})$ \\
            \tcp{$\times^\cap$ returns the pairwise intersection of two sets of intervals} 
            $\mathcal{I} \gets \{[0,r_1], [r_1, r_2], [r_2, \infty]\} \times^\cap \{[0,s_1], [s_1, s_2], [s_2, \infty]\}$ \\
            $\mathcal{I}_{v, v'} \gets \emptyset$\\
            \ForEach{$I \in \mathcal{I}$}{
                $c^* \gets \min(c_\infty, c_{\le k} - r/2, c_{< k} - r)$ assuming $r \in I$\\
                $d^* \gets \min(d_\infty, d_{\le k} - r/2, d_{< k} - r)$ assuming $r \in I$\\
                $I_r \gets$ subinterval of $I$ containing $r$ s.t. $d^* - c^* > b(c(u, v) - c(u,v'))$ \\
                $\mathcal{I}_{v, v'} \gets \mathcal{I}_{v, v'} \cup \{I_r\}$\\
            }
            $\mathcal{I}_v \gets \mathcal{I}_v \times^\cap \mathcal{I}_{v, v'}$
        }
        $\mathcal{I}^* \gets \mathcal{I}^* \times^\cap \mathcal{I}_v$
    }
    \eIf{$\mathcal{I}^* = \emptyset$}{
        \Return $\bot$
    } {
        \Return smallest $r \in \mathcal{I}^*$
    }
    \caption{Set Nash equilibrium on given path, if possible. Uses $\textsf{Bellman-Ford}(v, i)$ as a subroutine, to get the cost of the cheapest $v\to t$ path with at most $i$ edges. Also uses a subroutine to compute the pairwise intersection of two sets of disjoint intervals.}
    \label{alg:ne}
\end{algorithm}

\section{Extending the Model with Bias Uncertainty and Multiple Competitors}
One of the shortcomings of the prior results is that agents are assumed to have publicly known, identical biases. Neither identical nor known biases are very realistic assumptions. In general, it seems odd for the agents to know the exact present bias of their opponent, or for the agents to always have the same bias. From a design perspective, the task designer might not have full knowledge of the agents' biases,  and so may be unsure how to set the reward. We therefore consider this game played by two agents whose biases are uncertain. The agents' biases are represented by random variables $B_1$ and $B_2$ drawn iid from distribution $F$, which is publicly known to both the agents and the designer. $b_1$ and $b_2$ correspond to the realizations of these random variables. Our goal is now to construct, as cheaply as possible, Bayes-Nash equilibria (BNE) where agents behave optimally with high probability. In this setting, the cost equation becomes $C_n(u, v) = b \cdot c(u, v) + \min_{P_v} c(P_v) - \E_{A_2}[R_{A_2}(P_{s \to u, v \to t})]$, where the expectation is over $A_2$'s choice of paths.

In this section, we provide a closed form BNE for the $n$-fan. We start with the case of two agents and then consider $m$ competing agents. Since we are searching for BNE, we assume that the agents know their competitor's strategy.

\subsection{Bayes-Nash Equilibria on the \texorpdfstring{$n$}{n}-fan}
As before, let $P_i$ represent the path that includes edge $(v_i, t)$, and let $P_0$ represent the optimal path. Let $\Pr[A_2 \to \mathcal{P}]$ denote the probability that $A_2$ takes any path in the set $\mathcal{P}$. \footnote{For convenience, we write $P_i$ instead of $\{P_i\}$ for singleton sets.} The probability is over $B_2 \sim F$. Let $P_{>i}$ denote the set of paths $\{P_{i+1}, \dots, P_n\}$. Further, note that for any vertex $v_i$, $\argmin_{P_{v_i}} c(P_{v_i}) - \E_{A_2}[R_{A_2}(P_{s\to, v_i\to t})] = (v_i, t)$. This is because going directly to $t$ will always be the cheapest and quickest path from $v_i$ to $t$. Thus:
\begin{align*}
    C_n(v_{i-1}, v_{i}) = c^{i} - \E[R(P_i)] \\
    C_n(v_{i-1}, t) = bc^{i-1} - \E[R(P_{i-1})]
\end{align*}
Further, $\E[R(P_i)] = r/2 \Pr[A_2 \to P_i] + r \Pr[A_2 \to P_{> i}]$. With this in mind, we start with a basic claim:
\begin{proposition} \label{cl:bne}
    When standing at vertex $v_i$, $A_1$ prefers $P_i$ over $P_{i+1}$ if:
    $$r/2\Pr[A_2 \to \{P_i, P_{i+1}\}] > c^i (b_1 - c)$$
\end{proposition}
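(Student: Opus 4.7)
The plan is to compare the perceived costs $C_n(v_i, t)$ and $C_n(v_i, v_{i+1})$ at vertex $v_i$ and translate the preference $C_n(v_i, t) < C_n(v_i, v_{i+1})$ into the stated inequality by direct algebra. First I would instantiate the general cost formula at $v_i$ using the fact (noted immediately before the proposition) that for any spine vertex $v_j$, the $\argmin$ continuation from $v_j$ is the direct edge $(v_j, t)$. This yields $C_n(v_i, t) = b_1 c^i - \E[R(P_i)]$, since the current edge $(v_i,t)$ has cost $c^i$ and is scaled by $b_1$, and $C_n(v_i, v_{i+1}) = c^{i+1} - \E[R(P_{i+1})]$, since the spine edge $(v_i,v_{i+1})$ has cost $0$ and the naive planned continuation is $(v_{i+1}, t)$ of cost $c^{i+1}$ (these match the two display equations given just before the proposition with $i$ shifted by one).

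Rearranging, $A_1$ strictly prefers $P_i$ over $P_{i+1}$ iff $c^i(b_1 - c) < \E[R(P_i)] - \E[R(P_{i+1})]$, after writing $c^{i+1} = c \cdot c^i$ and dividing. The core step is then showing
\[
\E[R(P_i)] - \E[R(P_{i+1})] \;=\; \tfrac{r}{2}\Pr[A_2 \to \{P_i, P_{i+1}\}].
\]
I would expand each expected reward using the formula stated just above the proposition, $\E[R(P_j)] = \tfrac{r}{2}\Pr[A_2 \to P_j] + r\Pr[A_2 \to P_{>j}]$, together with the decomposition $\Pr[A_2 \to P_{>i}] = \Pr[A_2 \to P_{i+1}] + \Pr[A_2 \to P_{>i+1}]$. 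The $r\cdot\Pr[A_2 \to P_{>i+1}]$ terms cancel, and the $r\cdot\Pr[A_2 \to P_{i+1}]$ term combines with $-\tfrac{r}{2}\Pr[A_2 \to P_{i+1}]$ to leave exactly $\tfrac{r}{2}\Pr[A_2 \to P_i] + \tfrac{r}{2}\Pr[A_2 \to P_{i+1}]$, which is the claimed expression.

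There is no real obstacle; the argument is routine once the cost formulas are expanded. The only thing to watch is the bookkeeping of win/tie contributions: the extra win-probability that $P_i$ enjoys over $P_{i+1}$ against opponents on $P_{>i+1}$ is precisely offset by the fact that $P_{i+1}$ strictly beats $P_{>i+1}$ as well, so the net difference between $\E[R(P_i)]$ and $\E[R(P_{i+1})]$ comes only from the two events $\{A_2 \to P_i\}$ (upgrade from loss to tie, worth $r/2$) and $\{A_2 \to P_{i+1}\}$ (upgrade from tie to win, worth $r/2$). Recognizing this symmetry is what produces the clean expression $\Pr[A_2 \to \{P_i, P_{i+1}\}]$ appearing in the statement.
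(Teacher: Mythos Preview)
Your proposal is correct and follows essentially the same route as the paper: both instantiate the two perceived costs at $v_i$, rearrange to $c^i(b_1-c) < \E[R(P_i)] - \E[R(P_{i+1})]$, and then simplify the reward difference via $\Pr[A_2 \to P_{>i}] - \Pr[A_2 \to P_{>i+1}] = \Pr[A_2 \to P_{i+1}]$ to obtain $\tfrac{r}{2}\Pr[A_2 \to \{P_i,P_{i+1}\}]$. Your concluding intuitive remark about the loss-to-tie and tie-to-win upgrades also mirrors the paper's post-proof commentary.
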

\begin{proof}
If $A_1$'s expected utility from going immediately to $t$ is higher than their utility from procrastinating, then
    \begin{align*}
        r \cdot \Pr[A_2 \to P_{>i}] &+ r/2 \cdot \Pr[A_2 \to P_i] - b_1c^i > r \cdot \Pr[A_2 \to P_{>i+1}] + r/2 \cdot \Pr[A_2 \to P_{i+1}] - c^{i+1}. \\
        \intertext{It follows that}
        c^i(b_1 - c) & < r(\Pr[A_2 \to P_{>i}] - \Pr[A_2 \to P_{>i+1}]) + r/2 (\Pr[A_2 \to P_i] - \Pr[A_2 \to P_{i+1}]) \\
        & = r(\Pr[A_2 \to P_{i+1}]) + r/2 (\Pr[A_2 \to P_i] - \Pr[A_2 \to P_{i+1}])  \\
        & = r/2\Pr[A_2 \to  \{P_i, P_{i+1}\}]. 
    \end{align*}
\qed
\end{proof}
This claim matches intuition; due to the tying mechanism, $A_1$ gets a reward $r/2$ higher by going to $t$ from $v_i$ if $A_2$ either takes $P_i$ or $P_{i+1}$. Note that $A_1$ is comparing $P_i$ to $P_{i+1}$ because the latter describes what he (naively) believes he will do if he procrastinates. We now describe a Bayes-Nash equilibrium in the $n$-fan.
\begin{theorem}\label{thm:fanBNE}
    Let $G$ be an $n$-fan with reward $r$ and suppose $B_1, B_2$ are drawn from distribution $B$ with CDF $F$. Let $p$ be the solution to $F(\frac{rp}{2}+c) = p$. If $p > \frac{1}{c^{n-1}+1}$, then the following strategy is a Bayes-Nash equilibrium:
    \begin{align*}
        P(b) = \begin{cases}
            \text{take }P_0, & b \le \frac{rp}{2}+c\\
            \text{take }P_n, & \text{otherwise}\\
        \end{cases}
    \end{align*}
    In this equilibrium, for either agent, the probability that they take $P_0$ is $p$. So the expected cost ratio will be $p + (1-p)c^n$.
\end{theorem}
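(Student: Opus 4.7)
My plan is to fix $A_2$'s strategy, compute what $A_1$ with bias $b_1$ actually does step by step using \autoref{cl:bne}, and verify the resulting trajectory equals $P(b_1)$ for every realized $b_1$; by symmetry the same conclusion holds for $A_2$, establishing the BNE. Given $A_2$'s strategy and the defining equation $F(rp/2+c)=p$, the induced path probabilities are $\Pr[A_2 \to P_0]=p$, $\Pr[A_2 \to P_n]=1-p$, and $\Pr[A_2 \to P_i]=0$ for all $1 \le i \le n-1$, which makes the right-hand side of \autoref{cl:bne} especially simple to evaluate at each vertex.

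For the decision at $s$, I apply \autoref{cl:bne} with $c^0=1$: $A_1$ prefers $P_0$ over $P_1$ iff $(r/2)\Pr[A_2 \to \{P_0,P_1\}] > b_1 - c$, which simplifies to $b_1 < rp/2 + c$ since $\Pr[A_2 \to P_1]=0$. Breaking ties toward $P_0$, $A_1$ goes directly to $t$ iff $b_1 \le rp/2 + c$, matching $P(b_1)$. If instead $b_1 > rp/2 + c$, $A_1$ walks to $v_1$, and I would then argue inductively that at every intermediate vertex $v_i$ with $1 \le i \le n-2$, \autoref{cl:bne} gives preference for $P_i$ over $P_{i+1}$ iff $(r/2)\Pr[A_2 \to \{P_i,P_{i+1}\}] > c^i(b_1 - c)$; but $\Pr[A_2 \to \{P_i,P_{i+1}\}]=0$, so (using $b_1 > c$) this strict inequality fails and $A_1$ strictly prefers to advance.

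The crucial and most delicate step is $v_{n-1}$, where $\Pr[A_2 \to \{P_{n-1},P_n\}]=1-p$ because $P_n$ lies in $A_2$'s support. \autoref{cl:bne} now reads: $A_1$ prefers $P_{n-1}$ over $P_n$ iff $b_1 < c + r(1-p)/(2c^{n-1})$; I need the reverse inequality $b_1 > c + r(1-p)/(2c^{n-1})$. Since the agent only reaches $v_{n-1}$ when $b_1 > rp/2 + c$, it suffices to show $rp/2 \ge r(1-p)/(2c^{n-1})$, equivalently $p \ge 1/(c^{n-1}+1)$, which is precisely the hypothesis. This is the main obstacle because the length tie between $P_{n-1}$ and $P_n$ injects a nontrivial $(1-p)$ reward term that is absent at the earlier vertices and breaks the ``free advance'' argument; the stated condition on $p$ is exactly calibrated so that the continuation threshold at $v_{n-1}$ is no larger than the departure threshold at $s$. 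At $v_n$ only one edge remains, completing $P_n$. The cost-ratio claim is then immediate: each agent takes $P_0$ (cost $1$) with probability $p$ and $P_n$ (cost $c^n$) with probability $1-p$, so the expected cost ratio equals $p + (1-p)c^n$.
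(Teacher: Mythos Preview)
Your proof is correct and mirrors the paper's own argument step for step: apply \autoref{cl:bne} at $s$ to obtain the threshold $rp/2+c$, use the vanishing probabilities on $P_1,\ldots,P_{n-1}$ to push the high-bias agent through $v_1,\ldots,v_{n-2}$, and then at $v_{n-1}$ compare the two thresholds to extract exactly the condition $p>1/(c^{n-1}+1)$. One small correction to your commentary: there is no \emph{length} tie between $P_{n-1}$ and $P_n$ (they have $n$ and $n+1$ edges respectively); the $(1-p)$ term enters only because $P_n$ lies in $A_2$'s support and hence in the set $\{P_{n-1},P_n\}$ that \autoref{cl:bne} isolates---but this does not affect the formal argument.
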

\begin{proof}
    Using \autoref{cl:bne}, at $s$, $A_1$ prefers $P_0$ over $P_1$ if $r/2\Pr[A_2 \to \{P_0, P_{1}\}] > b_1 - c$. Since $\Pr[A_2 \to P_1] = 0$, this simplifies to $b_1 < \frac{r\Pr[A_2 \to P_0]}{2} + c$. Let $\Pr[A_2 \to P_0] = p$. Then, since the biases are drawn iid, for this strategy to be a symmetric BNE we need:
    \begin{align*}
        \Pr[A_1 \to P_0] &= \Pr[A_2 \to P_0] \\
        \Pr[B_1 \le \frac{rp}{2} + c] &= p \\
        F\left(\frac{rp}{2} + c\right) &= p
    \end{align*}
    So if $p$ satisfies this equation, by \autoref{cl:bne}, $A_1$ prefers to go from $s$ to $t$ directly. Now, consider the case where $b_1 > \frac{rp}{2} + c$. Then, $A_1$ prefers to go to $v_1$. Again using \autoref{cl:bne}, at $v_1$, $A_1$ will prefer $t$ over $v_2$ if $r/2\Pr[A_2 \to \{P_1, P_2\}] > c(b_1 - c)$, which implies that $b_1 - c < 0$.
    Since we know that $b_1 > \frac{rp}{2} + c$, it's certainly the case that $b_1 > c$, so this never holds. Thus, $A_1$ prefers to go to $v_2$. The same argument can be repeated until the agent reaches $v_{n-1}$. Then, the agent will prefer $t$ over $v_n$ if $r/2\Pr[A_2 \to \{P_{n-1}, P_n\}] > c^{n-1}(b_1 - c)$, i.e. if  $b_1 < \frac{r(1-p)}{2c^{n-1}} + c$.

    Combining our assumption that $b_1 > \frac{rp}{2} + c$ with this inequality, we get that $p$ must be less than $\frac{1}{c^{n-1}+1}$. When this is false, the agent will always prefer $P_n$ when their bias $b_1 > \frac{rp}{2} + c$. So, under these conditions, the given strategy is a BNE.
\qed
\end{proof}
Notice that while the trivial solution $p = 0$ satisfies $F(\frac{rp}{2} + c) = p$, this is not above $\frac{1}{c^{n-1}+1}$, so the trivial solution is not relevant for finding Bayes-Nash equilibria. And while it's possible (for some distributions $B$ and very low rewards $r$) for the non-trivial solution to $p = \Pr[B_1 \le \frac{rp}{2} + c]$ to be less than $\frac{1}{c^{n-1}+1}$, for the distributions we have explored this is not the case. One might wonder if there's a straightforward generalization of this BNE to other graphs with a dominant path, as in the case without bias uncertainty. In Appendix \ref{app:generalBNE}, we discuss challenges that we encountered trying to do this.

We now use the theorem to understand how much reward is required for optimal behavior with high probability, or a low expected cost ratio (which is a much stronger requirement). Since the expected cost is $p + c^n (1-p)$, in order for this to be low, $1-p$ has to be close to $1/c^n$. Plugging this in to the CDF, we see that for this to happen, we must have
\begin{align*}
    F\left(\frac{r}{2}\left(1-\frac{1}{c^n}\right) + c\right) = 1-\frac{1}{c^n}
\end{align*}
which essentially requires that exponentially little probability mass (in $n$) remains after $r/2$ distance from $c$. For an exponential distribution, this requires $r$ to be linear in $n$, and with a heavier tailed distribution like the Equal Revenue distribution, this requires $r$ to be exponential in $n$. But we may be content with simply guaranteeing optimal behavior with high probability. In that case, so long as $r$ is increasing in $n$, the agents will take the optimal path with high probability for at least the equal revenue, exponential, and uniform distributions. We more precisely explore the probability of optimal behavior and the cost ratios for these distributions in Appendix \ref{app:bneCosts}.

\subsection{Increasing Number of Competitors}
We saw earlier that increasing the number of competitors doesn't change the \textit{per-agent} reward needed for optimal behavior. But one might hope that in Bayesian settings such as bias uncertainty, increasing the number of agents significantly decreases the per-agent reward needed to encourage optimal behavior -- in particular, as the number of agents increases, the probability of \textit{some} agent having a very low bias increases. We provide evidence against this claim. We start by showing that the following strategy is a Bayes-Nash equilibrium with $m$ agents.
\begin{theorem}\label{thm:bneManyAgent}
    Let $G$ be an $n$-fan with reward $r$ and suppose there are $m+1$ competitors with biases drawn iid from distribution $B$ with CDF $F$. Let $p$ be the solution to $F(rd(p,m) + c) = p$, where:
    \begin{align*}
        d(p,m) = \frac{1 - (1-p)^{m}(1+pm)}{p(m+1)}
    \end{align*}
    If $p > m^{-1}\log\left(1+m + \frac{m+1}{2c^{n-1}}\right)$, the following strategy is a Bayes-Nash equilibrium where the probability of optimal behavior for any agent is $p$:
    \begin{align*}
        P(b) = \begin{cases}
            \text{take }P_0, & b \le rd(p,m) + c\\
            \text{take }P_n, & \text{otherwise}\\
        \end{cases}
    \end{align*}
\end{theorem}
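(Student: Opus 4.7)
The plan is to generalize the proof of \autoref{thm:fanBNE} from two agents to $m+1$. The structure is identical: compute the expected reward from the two candidate actions at $s$, derive an indifference threshold on the bias, verify that the resulting fixed-point condition is exactly $F(rd(p,m)+c) = p$, and finally show that any agent with bias above the threshold procrastinates all the way to $v_n$.

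First, I assume each of the $m$ opponents independently takes $P_0$ with probability $p$ and $P_n$ otherwise. Conditioning on how many of them take $P_0$, the expected reward from $A_1$ playing $P_0$ is $\sum_{k=0}^m \binom{m}{k} p^k (1-p)^{m-k} \tfrac{r}{k+1}$, which collapses to $\tfrac{r(1-(1-p)^{m+1})}{(m+1)p}$ by pulling out a factor of $1/(m+1)$ and reindexing the binomials. The naive alternative at $s$ is $P_1$ (length $2$), which wins exactly when every opponent took $P_n$, giving expected reward $r(1-p)^m$. Subtracting these and simplifying gives reward gap $r \cdot d(p,m)$. By the natural analogue of \autoref{cl:bne} for $m+1$ agents, $A_1$ prefers $P_0$ iff $rd(p,m) > b_1 - c$, and imposing symmetric consistency $\Pr[B_1 \le rd(p,m)+c]=p$ yields exactly $F(rd(p,m)+c) = p$.

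Next, I show that for any bias $b_1 > rd(p,m)+c$ the agent passes through every $v_i$ and ends on $P_n$. For $i < n-1$, both $P_i$ (length $i+1$) and $P_{i+1}$ (length $i+2$) beat every opponent who took $P_n$ and lose to every opponent who took $P_0$, so expected rewards coincide and procrastination is forced by the positive cost gap $c^i(b_1-c)$. The only nontrivial step is at $v_{n-1}$: here $P_{n-1}$ (length $n$) beats opponent $P_n$ while $P_n$ (length $n+1$) only \emph{ties} it, producing an expected reward advantage of $\tfrac{rm(1-p)^m}{m+1}$ for playing $P_{n-1}$. The agent still procrastinates iff this advantage is dominated by the cost gap, i.e., $c^{n-1}(b_1-c) \ge \tfrac{rm(1-p)^m}{m+1}$, and for this to be implied by $b_1 > rd(p,m)+c$ the equilibrium probability must satisfy $d(p,m)(m+1)c^{n-1} \ge m(1-p)^m$, or equivalently $(1-p)^{-m} \ge 1 + pm(1+c^{-(n-1)})$.

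The main obstacle is this final inequality. The fixed-point equation and the intermediate procrastination steps are direct computations, but the tie-breaking reward of $r/(m+1)$ at $v_{n-1}$ creates a nontrivial interaction with $d(p,m)$ that has no counterpart in the two-agent case. The hypothesis $p > m^{-1}\log(1+m+\tfrac{m+1}{2c^{n-1}})$ is tailored so that combining $(1-p)^{-m} \ge e^{pm}$ with the resulting exponential lower bound $e^{pm} > 1+m+\tfrac{m+1}{2c^{n-1}}$ is strong enough to imply the required inequality. Verifying this carefully, and understanding why the logarithmic form is the natural generalization of the two-agent bound $p \ge 1/(c^{n-1}+1)$, is the main technical content of the proof.
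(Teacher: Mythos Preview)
Your plan mirrors the paper's proof almost exactly: compute $\E[r/(N+1)]$ via the binomial identity to get $d(p,m)$, derive the threshold $b\le rd(p,m)+c$ at $s$, observe that $P_1,\dots,P_{n-1}$ all have identical expected reward so the agent coasts to $v_{n-1}$, and finally check that the hypothesis on $p$ rules out a deviation at $v_{n-1}$.

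There is one substantive discrepancy. At $v_{n-1}$ you (correctly, under the stated ``split evenly among those who tied'' rule) give $P_n$ expected reward $\tfrac{r}{m+1}(1-p)^m$, yielding a reward gap of $\tfrac{rm}{m+1}(1-p)^m$. The paper instead writes $\tfrac{r}{2}(1-p)^m$, apparently carrying over the two-agent split; its derivation of the threshold $p>m^{-1}\log\bigl(1+m+\tfrac{m+1}{2c^{n-1}}\bigr)$ is built on that number. Consequently the inequality you must establish,
\[
(1-p)^{-m}\;\ge\;1+pm\Bigl(1+c^{-(n-1)}\Bigr),
\]
is strictly stronger than the paper's target $(1-p)^{-m}\ge 1+pm+\tfrac{p(m+1)}{2c^{n-1}}$ for every $m\ge 2$. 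The route you sketch---combine $(1-p)^{-m}\ge e^{pm}$ with $e^{pm}>1+m+\tfrac{m+1}{2c^{n-1}}$---does \emph{not} close the gap by the paper's ``replace $p$ by $1$'' trick: that would require $\tfrac{m+1}{2c^{n-1}}\ge \tfrac{m}{c^{n-1}}$, which fails for $m\ge 2$. You need a slightly sharper argument, e.g.\ note that $g(x)=e^{x}-1-(1+c^{-(n-1)})x$ is increasing for $x>\log(1+c^{-(n-1)})$ and check directly that $g(L)\ge 0$ at $L=\log\bigl(1+m+\tfrac{m+1}{2c^{n-1}}\bigr)$; this reduces to $M-1\ge(1+c^{-(n-1)})\log M$ with $M=1+m+\tfrac{m+1}{2c^{n-1}}\ge 3$, which does hold for $c^{n-1}>1$. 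So your plan goes through, but the ``main technical content'' you flag is a bit more delicate than either your outline or the paper's own argument suggests.
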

\begin{proof}
    See Appendix \ref{app:moreAgents}.
\end{proof}
Now, we consider the equal revenue distribution. In the case of two agents, we required exponentially high reward to get a constant expected cost ratio. Unfortunately, this remains true even as the number of agents competing goes to infinity.
\begin{theorem}\label{thm:erdManyAgent}
    Suppose the biases are drawn from an equal revenue distribution shifted over by $c$. Let $s$ denote the per agent price, so $r = (m+1)s$. Then, no matter how many agents are competing, for any fixed agent, the probability that they behave optimally is at most $\frac{\sqrt{4s+s^2} - s}{2}$. 
\end{theorem}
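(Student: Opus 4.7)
The plan is to specialize Theorem~\ref{thm:bneManyAgent} to the shifted equal revenue distribution and then bound the resulting fixed point with two elementary inequalities.

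First, I would instantiate the fixed-point condition $F(rd(p,m)+c) = p$ from Theorem~\ref{thm:bneManyAgent} for the shifted equal revenue CDF $F(x) = 1 - 1/(x-c)$ (supported on $[c+1,\infty)$). The left-hand side becomes $1 - 1/(rd(p,m))$, so the condition reduces to $rd(p,m) = 1/(1-p)$. Substituting the closed form of $d(p,m)$ together with $r = (m+1)s$, the $(m+1)$ factors cancel and the equilibrium equation simplifies to
\begin{equation}
    s(1-p)\bigl(1 - (1-p)^m(1+mp)\bigr) = p. \label{eq:erplan}
\end{equation}

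Second, I would verify that the bracket $1 - (1-p)^m(1+mp)$ lies in $[0,1]$ for every $m \geq 1$ and $p \in [0,1]$, uniformly in $m$. The slickest way is combinatorial: $(1-p)^m(1+mp) = \Pr[\mathrm{Bin}(m+1,p) \leq 1]$, so the bracket is the probability of at least two successes and is automatically in $[0,1]$. One can alternatively check this directly by verifying that $f(p) := (1-p)^m(1+mp)$ satisfies $f(0) = 1$, $f(1) = 0$, and $f'(p) = -m(m+1)p(1-p)^{m-1} \leq 0$ on $[0,1]$. Combined with \eqref{eq:erplan}, this yields the $m$-independent bound $p \leq s(1-p)$.

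Third, since $p \in [0,1]$, multiplying both sides of $p \leq s(1-p)$ by $p$ gives $p^2 \leq sp(1-p) \leq s(1-p)$, which rearranges to $p^2 + sp - s \leq 0$. The positive root of $x^2 + sx - s = 0$ is exactly $(\sqrt{4s+s^2}-s)/2$, so the symmetric equilibrium probability $p$ is bounded above by that quantity, independent of $m$. Because the theorem constructs a symmetric BNE in which each agent behaves optimally with probability $p$, this is precisely the stated bound on the probability of optimal behavior for any fixed agent.

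There is no substantive obstacle here beyond careful bookkeeping; the entire argument is a two-step bound on a fixed-point equation. The only conceptual point worth flagging in the writeup is that the $m$-dependent factor vanishes under the trivial inequality ``it is at most $1$,'' which is exactly what makes the resulting bound hold \emph{uniformly} over the number of competitors and thus delivers the ``no matter how many agents'' conclusion of the theorem.
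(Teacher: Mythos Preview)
Your approach is correct in spirit and genuinely different from the paper's, but there is one bookkeeping slip worth fixing. The paper's shifted equal revenue distribution (see \autoref{lem:erdBNE} and the proof of \autoref{thm:erdManyAgent}) has CDF $F(z) = 1 - \tfrac{1}{z-c+1}$, supported on $[c,\infty)$, not $F(x) = 1 - \tfrac{1}{x-c}$. With the correct CDF the fixed-point condition from \autoref{thm:bneManyAgent} reads $rd(p,m)+1 = 1/(1-p)$, which after substituting $r = (m+1)s$ and the formula for $d(p,m)$ simplifies to
\[
    s(1-p)\bigl(1 - (1-p)^m(1+mp)\bigr) \;=\; p^2,
\]
with $p^2$ on the right rather than $p$. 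Your bracket bound then yields $p^2 \le s(1-p)$ directly, i.e.\ $p^2 + sp - s \le 0$, so your third step (multiplying by $p$ and invoking $p\le 1$) becomes unnecessary. Everything else goes through unchanged.

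As for the comparison: the paper takes a different route. Rather than bounding the $m$-dependent bracket by $1$, it argues (via implicit differentiation) that the solution $p$ is increasing in $m$ and then passes to the $m\to\infty$ limit, where the equation collapses to $(1-p)/p^2 = 1/s$ and the stated bound is attained. Your argument is more elementary---it sidesteps both the monotonicity claim and the limit---and the probabilistic identification $(1-p)^m(1+mp) = \Pr[\mathrm{Bin}(m+1,p)\le 1]$ is a clean way to make the bracket bound immediate. The paper's approach, on the other hand, gives slightly more information: it shows that $\tfrac{\sqrt{4s+s^2}-s}{2}$ is actually the supremum of the equilibrium probabilities over $m$, not merely an upper bound.
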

\begin{proof}
    See Appendix \ref{app:moreAgents}.
\end{proof}
Earlier results show that the probability of optimal behavior with two competing agents is $\frac{s-1}{s}$, which is nearly identical to the theorem's bound as $s$ grows.\footnote{To see this, note that $\frac{\sqrt{4s+s^2} - s}{2}$ can be written as $1-\frac{\sqrt{4s+s^2} - s}{\sqrt{4s+s^2} + s}$ and $\frac{s-1}{s} = 1 - \frac{1}{s}$. Further, $\frac{\sqrt{4s+s^2} - s}{\sqrt{4s+s^2} + s} \approx \frac{2}{2+2s}$ as $s$ grows, since $\sqrt{4s + s^2} \approx s + 2$. So, both probabilities are $1 - O(s^{-1})$.} So even as $m \to \infty$, the relationship between the reward and the probability of optimal behavior doesn't significantly change (and thus the relationship between the reward and expected cost ratio won't significantly change).
We conjecture that, in general, increasing the number of agents does not significantly decrease the per-agent reward. 

\section{Conclusion}
We studied the impact of competition on present bias, showing that in many settings where naive agents can experience exponentially high cost ratio, a small amount of competition drives agents to optimal behavior. This paper is a first step towards painting a more optimistic picture than much of the work surrounding present bias. Our results highlight why, in naturally competitive settings, otherwise biased agents might behave optimally. Further, task/mechanism designers can use our results to directly alleviate the harms of present bias. This competitive model might be a more natural model than other motivation schemes, such as internal edge rewards, and is able to more cheaply ensure optimal behavior. Our work also leaves open many exciting questions.

First, with bias uncertainty, we only obtain concrete results on the $n$-fan. So one obvious direction is to determine which graphs have Bayes-Nash equilibria on the optimal path, and what these equilibria look like. Second, we explore two ``dimensions'' of competition -- the amount of reward and the number of competitors, finding that the latter is unlikely to be significant. Another interesting goal is thus to explore new dimensions of competition.

Lastly, we could extend our work beyond cost ratios, moving to the model where agents can abandon their path at any point. For one, this move would allow us to integrate results on \textit{sunk-cost} bias, represented as an intrinsic cost for abandoning a task that's proportional to the amount of effort expended. \citet{kleinberg2017multiple} show that agents who are sophisticated with regard to their present-bias, but naive with respect to their sunk cost bias can experience exponentially high cost before abandoning their traversal (this is especially interesting because sophisticated agents without sunk cost bias behave nearly optimally). Can competition alleviate this exponential worst case? There are also interesting computational questions in this model. For instance, given a fixed reward budget $r$, is it possible to determine in polynomial time if one can induce an equilibrium where both agents traverse the graph? Such problems are NP-hard for other reward models, but may be tractable with competition. Overall, the abandonment setting has several interesting interactions with competition that we have not explored.

%
%
\bibliographystyle{abbrvnat}
\bibliography{refs}

\begin{thebibliography}{17}
\providecommand{\natexlab}[1]{#1}
\providecommand{\url}[1]{\texttt{#1}}
\expandafter\ifx\csname urlstyle\endcsname\relax
  \providecommand{\doi}[1]{doi: #1}\else
  \providecommand{\doi}{doi: \begingroup \urlstyle{rm}\Url}\fi

\bibitem[Albers and Kraft(2017)]{albers2017value}
S.~Albers and D.~Kraft.
\newblock On the value of penalties in time-inconsistent planning.
\newblock \emph{arXiv preprint arXiv:1702.01677}, 2017.

\bibitem[Albers and Kraft(2019)]{albers2019motivating}
S.~Albers and D.~Kraft.
\newblock Motivating time-inconsistent agents: A computational approach.
\newblock \emph{Theory of computing systems}, 63\penalty0 (3):\penalty0
  466--487, 2019.

\bibitem[Ariely and Jones(2008)]{ariely2008predictably}
D.~Ariely and S.~Jones.
\newblock \emph{Predictably irrational}.
\newblock Harper Audio New York, NY, 2008.

\bibitem[Arkes and Blumer(1985)]{arkes1985psychology}
H.~R. Arkes and C.~Blumer.
\newblock The psychology of sunk cost.
\newblock \emph{Organizational behavior and human decision processes},
  35\penalty0 (1):\penalty0 124--140, 1985.

\bibitem[DellaVigna(2009)]{dellavigna2009psychology}
S.~DellaVigna.
\newblock Psychology and economics: Evidence from the field.
\newblock \emph{Journal of Economic literature}, 47\penalty0 (2):\penalty0
  315--72, 2009.

\bibitem[DellaVigna and Malmendier(2006)]{dellavigna2006paying}
S.~DellaVigna and U.~Malmendier.
\newblock Paying not to go to the gym.
\newblock \emph{american economic Review}, 96\penalty0 (3):\penalty0 694--719,
  2006.

\bibitem[Frederick et~al.(2002)Frederick, Loewenstein, and
  O'donoghue]{frederick2002time}
S.~Frederick, G.~Loewenstein, and T.~O'donoghue.
\newblock Time discounting and time preference: A critical review.
\newblock \emph{Journal of economic literature}, 40\penalty0 (2):\penalty0
  351--401, 2002.

\bibitem[Gravin et~al.(2016)Gravin, Immorlica, Lucier, and
  Pountourakis]{gravin2016procrastination}
N.~Gravin, N.~Immorlica, B.~Lucier, and E.~Pountourakis.
\newblock Procrastination with variable present bias.
\newblock \emph{arXiv preprint arXiv:1606.03062}, 2016.

\bibitem[Kahneman and Tversky(1979)]{kahneman1979prospect}
D.~Kahneman and A.~Tversky.
\newblock Prospect theory: An analysis of decision under risk.
\newblock \emph{Econometrica}, 47\penalty0 (2):\penalty0 263--292, 1979.

\bibitem[Kleinberg and Oren(2014)]{kleinberg2014naive}
J.~Kleinberg and S.~Oren.
\newblock Time-inconsistent planning: a computational problem in behavioral
  economics.
\newblock In \emph{Proceedings of the fifteenth ACM conference on Economics and
  computation}, pages 547--564, 2014.

\bibitem[Kleinberg et~al.(2016)Kleinberg, Oren, and
  Raghavan]{kleinberg2016sophisticated}
J.~Kleinberg, S.~Oren, and M.~Raghavan.
\newblock Planning problems for sophisticated agents with present bias.
\newblock In \emph{Proceedings of the 2016 ACM Conference on Economics and
  Computation}, pages 343--360, 2016.

\bibitem[Kleinberg et~al.(2017)Kleinberg, Oren, and
  Raghavan]{kleinberg2017multiple}
J.~Kleinberg, S.~Oren, and M.~Raghavan.
\newblock Planning with multiple biases.
\newblock In \emph{Proceedings of the 2017 ACM Conference on Economics and
  Computation}, pages 567--584, 2017.

\bibitem[Ma et~al.(2019)Ma, Meir, Parkes, and Wu-Yan]{ma2019penalty}
H.~Ma, R.~Meir, D.~C. Parkes, and E.~Wu-Yan.
\newblock Penalty bidding mechanisms for allocating resources and overcoming
  present bias.
\newblock \emph{arXiv preprint arXiv:1906.09713}, 2019.

\bibitem[Oren and Soker(2019)]{oren2019principal}
S.~Oren and D.~Soker.
\newblock Principal-agent problems with present-biased agents.
\newblock In \emph{International Symposium on Algorithmic Game Theory}, pages
  237--251. Springer, 2019.

\bibitem[Roughgarden(2016)]{roughgarden2016cs269i}
T.~Roughgarden.
\newblock Cs269i: Incentives in computer science lecture\# 19:
  Time-inconsistent planning.
\newblock 2016.

\bibitem[Tang et~al.(2017)Tang, Teng, Wang, Xiao, and
  Xu]{tang2017computational}
P.~Tang, Y.~Teng, Z.~Wang, S.~Xiao, and Y.~Xu.
\newblock Computational issues in time-inconsistent planning.
\newblock In \emph{Thirty-First AAAI Conference on Artificial Intelligence},
  2017.

\bibitem[Yan and Yong(2019)]{yan2019time}
W.~Yan and J.~Yong.
\newblock Time-inconsistent optimal control problems and related issues.
\newblock In \emph{Modeling, Stochastic Control, Optimization, and
  Applications}, pages 533--569. Springer, 2019.

\end{thebibliography}

\appendix
\section{Concrete Distributions for Bias Uncertainty}\label{app:bneCosts}
In this section, we more precisely describe the relationship between $r$ and $p$ (the probability of optimal behavior in the BNE from \autoref{thm:fanBNE}) for several distributions.
\begin{lemma}\label{lem:erdBNE}
    Suppose $B_1$ and $B_2$ are drawn from an equal revenue distribution that's shifted over by $c$ (so $b_1, b_2 > c$). Then, for any $r \ge 2$, the probability of optimal behavior in the BNE from \autoref{thm:fanBNE} is $1 - \frac{2}{r}$, and thus the expected cost is $1 - \frac{2}{r} + \frac{2c^n}{r}$.
\end{lemma}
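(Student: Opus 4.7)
The plan is to specialize Theorem~\ref{thm:fanBNE} to the shifted equal revenue distribution and solve the resulting fixed-point equation explicitly. The equal revenue distribution shifted by $c$ is the distribution on $[c,\infty)$ with CDF
\[
F(x) = 1 - \frac{1}{x - c + 1},
\]
so the tail probability $\Pr[B > x]$ equals $1/(x-c+1)$ on its support. (This is the shift consistent with the footnote following Theorem~\ref{thm:erdManyAgent}, which asserts a two-agent optimal-play probability of $(s-1)/s = 1 - 2/r$ at $s = r/2$.)

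Next, I would plug this $F$ into the fixed-point condition $F\!\left(\tfrac{rp}{2}+c\right) = p$ supplied by Theorem~\ref{thm:fanBNE}. This gives
\[
1 - \frac{1}{rp/2 + 1} \;=\; p,
\]
which rearranges to $(1-p)(rp/2 + 1) = 1$. Discarding the trivial root $p=0$ and simplifying yields $r(1-p)/2 = 1$, i.e., $p = 1 - 2/r$. Observe that $p \ge 0$ precisely when $r \ge 2$, matching the lemma's hypothesis.

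Before concluding, I would verify that the equilibrium precondition $p > 1/(c^{n-1}+1)$ of Theorem~\ref{thm:fanBNE} is satisfied so that the BNE is actually realized. Since $p = 1 - 2/r$ tends to $1$ as $r$ grows and $c > 1$ makes the right-hand side small, this inequality holds whenever $r > 2(c^{n-1}+1)/c^{n-1}$, a threshold only marginally above $2$; the truly boundary case $r=2$ is a minor technicality since $p$ then degenerates. Finally, inserting $p = 1 - 2/r$ into the expected-cost expression $p + (1-p)c^n$ from Theorem~\ref{thm:fanBNE} directly produces $1 - 2/r + 2c^n/r$, completing the claim. The only obstacle worth flagging is confirming the exact form of the shifted CDF and the boundary value of $r$; the rest is algebraic substitution into results already proved.
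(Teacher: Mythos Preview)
Your proposal is correct and follows essentially the same approach as the paper: both identify the shifted equal revenue CDF as $F(z)=1-\tfrac{1}{z-c+1}$, substitute into the fixed-point equation $F(rp/2+c)=p$ from Theorem~\ref{thm:fanBNE}, and solve the resulting quadratic to obtain $p=(r-2)/r$. Your explicit check of the precondition $p>1/(c^{n-1}+1)$ is a small addition not present in the paper's proof.
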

\begin{proof}
    The equal revenue distribution described has CDF $F(z) = 1 - \frac{1}{z-c+1}$. Now, applying \autoref{thm:fanBNE}:
    \begin{align*}
        1 - \frac{1}{\frac{rp}{2} + c -c+1} &= p\\
        \frac{rp}{2} + 1 - 1 &= p \left(\frac{rp}{2} + 1\right) \\
        \left(\frac{r}{2}\right)p^2 + \left(1-\frac{r}{2}\right)p&= 0 \\
        p &= 0 \text{ or } \frac{r-2}{r}
    \end{align*}
    So with reward $r \ge 2$, $p = \frac{r-2}{r}$.
\qed
\end{proof}
Because this distribution is heavy tailed, the expected cost remains exponential unless $r$ is exponential. However, if $r$ is any increasing function of $n$, then w.h.p. both agents will take the optimal path. Now, we look at the exponential distribution.
\begin{lemma}\label{lem:expBNE}
    Suppose $B_1$ and $B_2$ are drawn from an exponential distribution, $\mathbf{Exp}(\lambda)$, that's shifted over by $c$ (so $b_1, b_2 > c$). Then, for any $r$, the probability of optimal behavior in the BNE from \autoref{thm:fanBNE} is at least $p \ge 1 - O(e^{-\lambda r/2})$, and thus the expected cost is $1 - O(e^{-\lambda r/2}) + O(c^n e^{-\lambda r/2})$.
\end{lemma}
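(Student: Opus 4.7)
The plan is to specialize the fixed-point equation of \autoref{thm:fanBNE} to the shifted exponential distribution and then extract the desired asymptotic bound on the non-trivial root. Since $B_i - c \sim \mathbf{Exp}(\lambda)$, the CDF is $F(z) = 1 - e^{-\lambda(z - c)}$ for $z \geq c$, so the equilibrium condition $F(rp/2 + c) = p$ simplifies to
\[
1 - p \;=\; e^{-\lambda r p/2}.
\]
I would let $p^{*}$ denote the non-trivial solution of this equation (whose existence in $(0,1)$ needs to be confirmed below), which, by \autoref{thm:fanBNE}, is precisely the probability with which each agent takes $P_0$ in the claimed BNE.

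Next I would analyze the fixed-point equation by studying $h(p) = 1 - e^{-\lambda r p/2} - p$. Note that $h(0) = 0$, $h(1) = -e^{-\lambda r/2} < 0$, and $h$ is strictly concave on $[0,1]$ because $h''(p) = -(\lambda r/2)^{2} e^{-\lambda r p/2} < 0$. Hence, whenever $h'(0) = \lambda r/2 - 1 > 0$ (i.e., $r > 2/\lambda$, which the $O(\cdot)$ statement implicitly assumes), $h$ has a unique root $p^{*}$ in $(0,1)$. To lower-bound $p^{*}$, I would plug $p = 1 - 2 e^{-\lambda r/2}$ into $h$, getting
\[
h\!\left(1 - 2e^{-\lambda r/2}\right) \;=\; 2 e^{-\lambda r/2} - e^{-\lambda r/2 + \lambda r e^{-\lambda r/2}} \;=\; e^{-\lambda r/2}\!\left(2 - e^{\lambda r e^{-\lambda r/2}}\right).
\]
Since $\lambda r e^{-\lambda r/2} \to 0$, this is strictly positive once $r$ is large enough that $\lambda r e^{-\lambda r/2} < \ln 2$. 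Concavity of $h$ together with $h(1) < 0$ then forces the unique positive root to lie above $1 - 2e^{-\lambda r/2}$, so $1 - p^{*} \leq 2 e^{-\lambda r/2} = O(e^{-\lambda r/2})$.

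Finally, I would verify the side condition $p^{*} > 1/(c^{n-1} + 1)$ from \autoref{thm:fanBNE}: because $c > 1$ this threshold is bounded away from $1$ for each fixed $n$, while our bound forces $p^{*} \to 1$ as $r$ grows, so the hypothesis is met for all sufficiently large $r$. Substituting into the cost-ratio expression $p^{*} + (1-p^{*})c^{n}$ supplied by \autoref{thm:fanBNE} gives
\[
p^{*} + (1 - p^{*}) c^{n} \;=\; 1 - (1 - p^{*}) + (1 - p^{*}) c^{n} \;=\; 1 - O(e^{-\lambda r/2}) + O(c^{n} e^{-\lambda r/2}),
\]
which is the stated bound.

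The main obstacle is getting the constant inside the exponent of the $O$-bound correct. A naive self-consistent approach (e.g., assuming only $p^{*} \geq 1/2$ and plugging back into $1 - p^{*} = e^{-\lambda r p^{*}/2}$) yields only $O(e^{-\lambda r/4})$, which is strictly weaker than claimed. The concavity-plus-test-point argument above avoids solving the transcendental equation in closed form while still recovering the tight $\lambda r/2$ rate; the rest of the argument is routine algebra and a check that the BNE preconditions remain satisfied in the asymptotic regime.
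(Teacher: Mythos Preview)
Your proof is correct and reaches the same conclusion as the paper, but by a genuinely different route. The paper solves the fixed-point equation $1 - e^{-\lambda r p/2} = p$ in closed form via the Lambert $W$ function, obtaining $p = 1 + \tfrac{2}{\lambda r}\,W(-\tfrac{\lambda r}{2} e^{-\lambda r/2})$, and then reads off the asymptotic from the Taylor expansion $W(x) = x - O(x^2)$ near $0$. You instead bypass the transcendental equation entirely: you exploit the strict concavity of $h(p) = 1 - e^{-\lambda r p/2} - p$ together with a well-chosen test point $p_0 = 1 - 2e^{-\lambda r/2}$, verify $h(p_0) > 0$ for large $r$, and conclude from $h(1) < 0$ that the unique positive root lies above $p_0$. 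Your approach is more elementary and self-contained (no special functions or their series expansions), while the paper's approach yields an exact expression for $p^*$ that could in principle give sharper constants or higher-order terms, though neither is needed for the stated $O$-bound. Your explicit check of the side condition $p^* > 1/(c^{n-1}+1)$ is also a nice touch; the paper's proof leaves that implicit.
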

\begin{proof}
    We first solve for $p$ in \autoref{thm:fanBNE}:
    \begin{align*}
        \Pr[B_1 \le \frac{rp}{2} + c] &= p \\
        1-e^{-\lambda(\frac{rp}{2} + c - c)} &= p \\
        1-e^{-\frac{\lambda rp}{2}} &= p \\
        \frac{\ln(1-p)}{p} &= \frac{-\lambda r}{2} \\
        p &= 0 \text{ or } 1 + \frac{2W(-\lambda r /2 \cdot  e^{-\lambda r/2})}{\lambda r}
    \end{align*}
    Where $W$ refers to the Lambert $W$ function, i.e. $W(z) = x \iff xe^x = z$. The Taylor series of $W$ around $0$ is:
    \begin{align*}
        W(x) = \sum_{n=1}^\infty \frac{(-n)^{n-1}}{n!}x^n = x - x^2 + \frac{3}{2}x^3 - \dots
    \end{align*}
    For our application, notice that $x = -\lambda r /2 \cdot  e^{-\lambda r/2}$ is very small; thus, we let $W(x) = x - O(x^2)$. So:
    \begin{align*}
        p \ge 1 - O(e^{-\lambda r/2})
    \end{align*}
\qed
\end{proof}
This means that if $r = O(n/\lambda)$, we get a constant expected cost ratio (and extremely high probability of optimal behavior). We finally consider the simple example of a uniform distribution.
\begin{lemma}\label{lem:uniBNE}
    Suppose $B_1$ and $B_2$ are drawn from the uniform distribution, $\mathbf{Unif}[c, d]$ (so $b_1, b_2 > c$). Then, for $r \ge 2(d-c)$, the probability of optimal behavior in the BNE from \autoref{thm:fanBNE} is $1$. For all $r < 2(d-c)$, this probability is zero.
\end{lemma}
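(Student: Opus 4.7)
The plan is to reduce the claim to the fixed-point equation $F(\tfrac{rp}{2}+c) = p$ from \autoref{thm:fanBNE}, using that for $\mathbf{Unif}[c,d]$ the CDF is piecewise: $F(z) = (z-c)/(d-c)$ on $[c,d]$ and $F(z)=1$ above $d$. The main step is a case split on whether the argument $\tfrac{rp}{2}+c$ falls in the linear region or in the saturated tail, and to check the boundary consistently.

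In the linear region, substituting the CDF gives $\tfrac{rp/2}{d-c} = p$, which rearranges to $p\bigl(\tfrac{r}{2(d-c)} - 1\bigr) = 0$. This forces $p=0$ whenever $r \ne 2(d-c)$, so no non-trivial fixed point exists in this regime. In the saturated region the equation collapses to $p=1$, but this is only self-consistent if $\tfrac{rp}{2}+c > d$ at $p=1$, i.e.\ $r > 2(d-c)$ (with equality covered by the boundary case where $p=1$ still satisfies the linear equation).

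Combining the two cases then yields the lemma: when $r \ge 2(d-c)$ the fixed point $p=1$ is available, and since $1 > \tfrac{1}{c^{n-1}+1}$ the strategy of \autoref{thm:fanBNE} is a genuine BNE in which both agents take $P_0$ almost surely; when $r < 2(d-c)$ only the trivial $p=0$ solves the equation, so the probability of optimal behavior is zero.

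There is no real obstacle here; the argument is a short calculation made clean by the bounded support of the uniform. The only point that warrants a sanity check is the boundary $r = 2(d-c)$, where the linear equation is satisfied for every $p$ and the saturated analysis independently yields $p=1$. Both descriptions agree on the value $p=1$, so the stated cutoff is sharp and the dichotomy between ``$1$'' and ``$0$'' is consistent across the boundary.
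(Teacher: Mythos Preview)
Your proposal is correct and follows essentially the same approach as the paper: both split on whether $\tfrac{rp}{2}+c$ lands in the linear part of the uniform CDF or in the saturated tail, and solve the fixed-point equation $F(\tfrac{rp}{2}+c)=p$ in each case. You are slightly more careful than the paper at the boundary $r=2(d-c)$ and in explicitly verifying the $p>\tfrac{1}{c^{n-1}+1}$ hypothesis of \autoref{thm:fanBNE}, but the argument is the same.
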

\begin{proof}
    For the first part of the lemma, note that if $rp/2 + c \ge d$, then $F(rp/2 + c) = 1$, so $p = 1$ holds if $r/2 + c \ge d$, or equivalently, $r \ge 2(d-c)$. If we assume $rp/2 + c < d$, then using CDF $F(z) = \frac{z-c}{d-c}$ to solve for $p$:
    \begin{align*}
        \frac{\frac{rp}{2} + c - c}{d - c} &= p \\
        \frac r2 p &= (d-c)p
    \end{align*}
    Since $r < 2(d-c)$, this is satisfied only with $p = 0$. 
\qed
\end{proof}
More generally, for any distribution bounded by $b^*$, computing the reward that guarantees a Nash equilibrium if both agents had public biases $b^*$ also suffices to guarantee a BNE in this setting (and in particular, this BNE always results in optimal behavior). For the case of the uniform distribution, this is the only reward that ever results in optimal behavior. As a direct consequence of this lemma, the expected cost ratio is $1$ with the reward $r = O(d)$.

\section{Difficulties Finding Bayes-Nash Equilibria in Arbitrary Graphs}\label{app:generalBNE}
The simplest way to generalize our BNE from the $n$-fan would be to consider the following schematic:
\begin{align*}
    S_y(b) = \begin{cases}
        \text{take }O, & b \le y\\
        \text{take }P,\text{ for some fixed }P & \text{otherwise}\\
    \end{cases}
\end{align*}
Where $y$ ideally would depend on only the costs in the graph, the reward $r$, and $p$, the solution to $F(y) = p$.
So the agents would either take the optimal path, if their bias was sufficiently small, or some path $P$ that can ``catch'' agents with high bias. In the fan graph, this was the path around the fan. That path happened to be the natural path for any bias, and perhaps more importantly, the ``limiting'' natural path as $b \to \infty$ (i.e. the greedy/myopic path, where you simply take the lowest cost edge at each step). But neither of these ideas generalize; this simple schematic will not produce a BNE in an arbitrary graph, as the following counter example shows.

Consider the modified $3$-fan below. Let $1 < c_1 = c < c^2 < c_2 < c_2^2 < c_3$. So the cost of the edges to $t$ increase faster than doubling.
\begin{figure}[H]
    \centering
    \ctikzfig{mod2fan}
    \label{fig:mod2fan}
\end{figure}
We claim the following:
\begin{proposition}\label{cl:bne_counter_ex}
    There is no symmetric BNE for this graph that assigns positive probability to only two paths, assuming the bias is drawn from a distribution with support on $[1, \infty]$.
\end{proposition}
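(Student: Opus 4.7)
My plan is to prove the proposition by contradiction via exhaustive case analysis over the $\binom{4}{2} = 6$ pairs drawn from the four paths $P_0, P_1, P_2, P_3$ of lengths $1, 2, 3, 4$. Suppose a symmetric BNE assigns positive probabilities $(p, 1-p)$ to two paths $\{Q_1, Q_2\}$.

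\textbf{First}, I would eliminate every pair not containing $P_0$. In such a pair, $P_0$ is strictly shortest, so $\E[R(P_0)] = r$. Evaluating the naive-cost equation at the minimum bias $b = 1$ (which lies in the support), I get $C_n(s, t) = 1 - r$ and $C_n(s, v_1) \ge c - r > 1 - r$, since every continuation from $v_1$ costs at least $c > 1$ while the reward is bounded by $r$. So the $b = 1$ agent strictly prefers $P_0$; by continuity this holds on a set of positive measure, forcing $P_0$ to appear with positive probability---contradicting $P_0 \notin \{Q_1, Q_2\}$. Hence $P_0$ must be one of the two paths.

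\textbf{Next}, I would analyze each pair $\{P_0, P_k\}$ for $k \in \{1, 2, 3\}$. In each case, write $p = \Pr[S(B) = P_0]$, compute the four expected rewards $\E[R(P_j)]$ from length comparisons, and derive the $s$-threshold $y_s = c + rp/2$ at which agents switch from taking $P_0$ to going to $v_1$; monotonicity in $b$ justifies the threshold form. Then I would trace a bias-$b$ agent through $v_1$ and $v_2$ and compute continuation thresholds. For $k = 1$: at $v_1$ the continuation threshold reduces to $\tau_1 = c_2/c$, which is finite. Since $[1, \infty]$ is unbounded above, biases exceeding $\max(y_s, \tau_1)$ occur with positive probability; these agents continue past $v_1$ and take $P_2$ or $P_3$, a contradiction. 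For $k = 2$: analogously at $v_2$, the continuation threshold $\tau_2$ is finite, and unbounded support forces a positive mass of agents to deviate past $v_2$ and take $P_3$.

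\textbf{The main obstacle} is the case $k = 3$, where the deviation range is \emph{bounded} (an agent at $v_3$ must go to $t$), so one cannot simply invoke unboundedness of the support. Here I would use the two continuation thresholds $\tau_1 = c_2/c$ at $v_1$ and $\tau_2 = (c_3 + r(1-p)/2)/c_2$ at $v_2$. The cost inequalities $c_2 > c^2$ and $c_3 > c_2^2$, together with $c > 1$, give
\[
\tau_1 \;=\; \tfrac{c_2}{c} \;<\; c_2 \;<\; \tfrac{c_3}{c_2} \;\le\; \tau_2,
\]
so any bias in $(\tau_1, \tau_2)$ continues at $v_1$ but stops at $v_2$, yielding $P_2$. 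For the BNE to avoid $P_2$ one needs $y_s \ge \tau_2$, equivalently
\[
c_2 c + \tfrac{r}{2}\bigl(p(c_2 + 1) - 1\bigr) \;\ge\; c_3.
\]
I would then combine this with the fixed-point condition $F(y_s) = F(c + rp/2) = p$ and the full-support hypothesis on $F$ to show these cannot be simultaneously satisfied, completing the final case. This last step is where I expect the technical work to concentrate: the interleaving of the threshold inequality with the fixed-point equation is delicate, and careful use of the strict cost gap $c_3 > c_2^2$ is essential to close the contradiction.
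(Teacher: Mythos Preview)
Your six-case decomposition is more elaborate than the paper's argument: the paper observes directly that for $b$ close to $1$ the agent must take $P_0$ and for $b$ large the agent follows the greedy path $P_3$, so any two-path BNE is supported on exactly $\{P_0,P_3\}$; it then treats only that case. Your handling of the other five cases is sound in spirit (though your formula $\tau_1=c_2/c$ in the $k=1$ case is off---there the reward for $P_1$ is $r(1-p)/2$, so the threshold is $(c_2+r(1-p)/2)/c$; the unbounded-support argument still goes through).

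In the decisive $\{P_0,P_3\}$ case you are actually \emph{more} careful than the paper. The paper shows the interval $[\tau_1,\tau_2]$ is nonempty (equivalently $c_3>c_2^2$ forces $\tau_1<\tau_2$) and declares this a contradiction. You correctly note that an agent reaches $v_1$ only when $b>y_s$, so the relevant deviation window is $(y_s,\tau_2)$, and what must be ruled out is $y_s\ge\tau_2$. Your plan is to combine $y_s\ge\tau_2$ with the fixed-point condition $F(y_s)=p$ to force a contradiction.

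That last step cannot be completed; in fact the statement appears to be false. The inequality $y_s\ge\tau_2$ reads
\[
c+\tfrac{rp}{2}\;\ge\;\tfrac{c_3}{c_2}+\tfrac{r(1-p)}{2c_2},
\]
and once $p>1/(c_2+1)$ this gives only a \emph{lower} bound on $r$, not an obstruction. Concretely, take $c=2$, $c_2=5$, $c_3=30$ (so $1<2<4<5<25<30$), $r=40$, $p=\tfrac12$: then $y_s=12$, $\tau_1=2.5$, $\tau_2=8$, and one verifies directly that the best response to ``$P_0$ if $b\le 12$, else $P_3$'' is that same strategy. Any continuous $F$ with full support on $[1,\infty)$ and $F(12)=\tfrac12$ (e.g.\ a shifted exponential with rate $\ln 2/11$) makes this a genuine two-path symmetric BNE. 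So the obstacle you flagged is real and cannot be removed; the paper's proof sidesteps it only by comparing $\tau_2$ to $\tau_1$ rather than to $y_s$, which is the weaker---and insufficient---inequality.
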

The main idea behind the proof is to notice first that path $P_0$ and $P_3$ will always be taken, when the bias is very low or very high. We then define the interval of biases (the interval depends on the reward, costs, and probability of optimal behavior) wherein $A_1$ wants to take $P_1$ or $P_2$ and argue that at least one of the intervals must be non-empty. 

\begin{proof}    
    Note that as $b:= b_1 \to \infty$, for any fixed reward, the agent will always take the greedy option of choosing the cheapest edge at every step. This will cause them to take the path $P_3$. Further if their bias is below $c_1$, they will take the optimal path, $P_0$. So, any BNE must assign positive probability to those two paths. We now prove, by contradiction, that it must assign positive probability to either $P_1$ or $P_2$.

    Let $p$ be $\Pr[A_2 \to P_0]$ and $1-p = \Pr[A_2 \to P_3]$. At $s$, $A_1$ prefers to procrastinate to $v_1$ when $b > c + rp/2$, so suppose this is the case. At $v_1$, $A_1$ would take $P_1$ if:
    \begin{align*}
        bc - r(1-p) &< c_2 - r(1-p) \\
        b &< c_2/c
    \end{align*}
    So, if $b \in [c+rp/2, c_2/c]$, $A_1$ would take $P_1$. If this interval is non-empty, then we're done. If it's not, then:
    \begin{align*}
        c_2c &> c+rp/2 \\
        rp &> \frac{2c_2}{c} - 2c \\
        r &> \frac{2c_2 - 2c^2}{cp}
    \end{align*}
    If the reward is at least this high, then $A_1$ always goes to $v_2$ at $v_1$. Now, at $v_2$ (where we know $b > c_2/c$), $A_1$ takes $P_2$ when:
    \begin{align*}
        bc_2 - r(1-p) &< c_3 - \frac{r}{2}(1-p) \\
        bc_2 &< c_3 + \frac{r}{2}(1-p) \\
        b &< \frac{2c_3 + r(1-p)}{2c_2}
    \end{align*}
    As before, if $b \in [c_2/c, \frac{2c_3 + r(1-p)}{2c_2}]$, $A_1$ would take $P_2$. We claim that this interval must be non-empty. Suppose it wasn't, and $c_2/c > \frac{2c_3 + r(1-p)}{2c_2}$. Then:
    \begin{align*}
        r(1-p)+2c_3 &< 2c_2^2/c\\
        r &< \frac{2c_2^2 - 2c_3c}{c(1-p)}
    \end{align*}
    However, this upper bound causes a contradiction -- from the definitions, we know that $c_2^2 < c_3$, so clearly $c_2^2 < c_3c$, and thus this bound requires $r < 0$. This contradicts our typical requirement that $r \ge 0$, but even ignoring this requirement, this also contradicts our earlier bound that $r > \frac{2c_2 - 2c^2}{cp}$.
    
    So by contradiction, $A_1$ will either take $P_1$ or $P_2$ with positive probability.
\qed
\end{proof}
In general, the main challenge with finding a BNE is finding a set of paths $\{P_i\}$ and bias intervals $\{I_i\}$ such that, for all intervals $I_i$, $A_1$ must want to traverse $P_i$ if their bias falls in $I_i$. The complication is that $A_1$'s decision to traverse $P_i$ depends on the distribution over $A_2$'s choice of path, which is induced by the intervals $I_i$ and the bias distribution. These interweaving constraints are difficult to manage even for simple graphs, as this preliminary result suggests.

\section{Adding More Competitors with Bias Uncertainty}\label{app:moreAgents}
\begin{proof}[Proof of \autoref{thm:bneManyAgent}]
    Suppose all  agents besides $A_1$ follow $P(b)$. Then, at $s$, $A_1$ views procrastinating to $v_1$ as having reward $-c + r\Pr[A_i \to P_n]^m$, using the iid assumption and the fact that $A_1$ will only win if everyone else has a bias which, according to $P(b)$, causes them to procrastinate. On the other hand, going directly from $s$ to $t$  has utility $-b_1 + \E[R]$, where
    $N$ is a random variable measuring how many agents take $P_0$ and
    $R = \frac{r}{N+1}$ is the random variable corresponding to the reward. Notice that $N \sim \mathbf{Bin}(m, p)$, where $p = \Pr[A_i \to P_0]$. We now use the following claim:
    \begin{claim}
        Let $N \sim \mathbf{Bin}(m, p)$. Then:
        $$\E\left[\frac{1}{N+1}\right] = \frac{1-(1-p)^{m+1}}{p(m+1)} $$
    \end{claim}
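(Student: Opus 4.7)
The plan is to expand the expectation directly using the definition of the binomial distribution and then collapse the resulting sum back into a closed form via the identity $\tfrac{1}{k+1}\binom{m}{k}=\tfrac{1}{m+1}\binom{m+1}{k+1}$. Concretely, I would first write
\[
\E\!\left[\frac{1}{N+1}\right] \;=\; \sum_{k=0}^{m}\frac{1}{k+1}\binom{m}{k}p^{k}(1-p)^{m-k}.
\]

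Next, I would apply the identity above to each term, which turns the $\tfrac{1}{k+1}\binom{m}{k}$ factor into $\tfrac{1}{m+1}\binom{m+1}{k+1}$. After pulling out the constant $\tfrac{1}{m+1}$, the sum becomes $\sum_{k=0}^{m}\binom{m+1}{k+1}p^{k}(1-p)^{m-k}$. I would then multiply and divide by $p$ so that the exponents on $p$ and $(1-p)$ match the indices of the binomial coefficient, giving $\tfrac{1}{p(m+1)}\sum_{k=0}^{m}\binom{m+1}{k+1}p^{k+1}(1-p)^{(m+1)-(k+1)}$.

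Finally, I would reindex with $j=k+1$ so the sum runs from $j=1$ to $j=m+1$, and then recognize it as the binomial expansion of $(p+(1-p))^{m+1}$ missing only its $j=0$ term. That is, the sum equals $1 - (1-p)^{m+1}$, yielding $\E[1/(N+1)] = \tfrac{1-(1-p)^{m+1}}{p(m+1)}$ as claimed.

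The only potentially tricky step is justifying the coefficient manipulation $\tfrac{1}{k+1}\binom{m}{k}=\tfrac{1}{m+1}\binom{m+1}{k+1}$, which is immediate from expanding both sides as factorials; beyond that the argument is a routine reindexing of a binomial sum, so I do not expect any real obstacle.
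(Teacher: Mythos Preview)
Your proof is correct and follows essentially the same approach as the paper: both start from the definition, use the identity $\tfrac{1}{k+1}\binom{m}{k}=\tfrac{1}{m+1}\binom{m+1}{k+1}$, and then collapse the resulting sum via the binomial theorem. The only cosmetic difference is that the paper factors out $(1-p)^m$ and applies the binomial theorem to $(1+\tfrac{p}{1-p})^{m+1}$, whereas you keep the terms in the form $p^{j}(1-p)^{m+1-j}$ and apply it to $(p+(1-p))^{m+1}=1$; your version is arguably a bit cleaner since it avoids the ratio $p/(1-p)$.
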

    \begin{proof}
        First, using the definition of the expectation and simplifying:
        \begin{align*}
            \E\left[\frac{1}{N+1}\right] &= \sum_{i=0}^m \frac{1}{i+1} \binom{m}{i} p^i (1-p)^{m-i} \\
            &= \sum_{i=0}^m \frac{(1-p)^m}{m+1} \binom{m+1}{i+1} \left(\frac{p}{1-p}\right)^i \\
            &= \frac{(1-p)^{m+1}}{p(m+1)} \sum_{i=0}^m \binom{m+1}{i+1} \left(\frac{p}{1-p}\right)^{i+1}
        \end{align*}
        Now, notice that the binomial theorem says:
        \begin{align*}
            \left(1+\frac{p}{1-p}\right)^{m+1} &= \binom{m+1}{0} \left(\frac{p}{1-p}\right)^{0} + \sum_{j=1}^{m+1} \binom{m+1}{j} \left(\frac{p}{1-p}\right)^{j} \\
            \left(1+\frac{p}{1-p}\right)^{m+1} - 1 &= \sum_{i=0}^m \binom{m+1}{i+1} \left(\frac{p}{1-p}\right)^{i+1}
        \end{align*}
        Plugging this back in:
        \begin{align*}
            \E\left[\frac{1}{N+1}\right] &= \frac{(1-p)^{m+1}}{p(m+1)} \left(\left(1+\frac{p}{1-p}\right)^{m+1}\right) \\ 
            &= \frac{\left((1-p)\left(1+\frac{p}{1-p}\right)\right)^{m+1} - (1-p)^{m+1}}{p(m+1)} \\
            &= \frac{1 - (1-p)^{m+1}}{p(m+1)}
        \end{align*}
    \qed
    \end{proof}
    From the claim, we see that $\E[R] = \frac{r(1 - (1-p)^{m+1})}{p(m+1)}$. Now, $A_1$ prefers to go immediately from $s$ to $t$  when:
    \begin{align*}
        -b_1 + \E[R] &> -c + r(1-p)^m \\
        b_1 &< r\left(\frac{1 - (1-p)^{m+1}}{p(m+1)} - (1-p)^m\right) + c \\
        &= r\left(\frac{1 - (1-p)^{m}(1+pm)}{p(m+1)}\right) + c
    \end{align*}
    For ease of notation, let $d(p, m) = \frac{1 - (1-p)^{m}(1+pm)}{p(m+1)}$. As a sanity check, one can show that $d(p, m)$ is always positive (to show that the expected reward will always be positive).
    \begin{claim}
        $d(p,m) > 0$
    \end{claim}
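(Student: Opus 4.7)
The plan is to reduce the claim to showing that the numerator $1 - (1-p)^m(1+pm)$ is positive, since the denominator $p(m+1)$ is manifestly positive for $p \in (0,1]$ and $m \geq 1$. I would pursue two parallel routes, either of which suffices.

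The cleanest route is to recognize that $d(p,m)$ already has a meaningful probabilistic interpretation from the preceding derivation. Setting $N \sim \mathbf{Bin}(m,p)$ and using the just-proven identity $\E[1/(N+1)] = \frac{1-(1-p)^{m+1}}{p(m+1)}$, a short algebraic rearrangement shows
\begin{equation*}
    d(p,m) = \E\!\left[\frac{1}{N+1}\right] - (1-p)^m.
\end{equation*}
Since $(1-p)^m = \Pr[N=0]$, expanding the expectation yields
\begin{equation*}
    d(p,m) = \sum_{i=0}^m \frac{1}{i+1}\Pr[N=i] - \Pr[N=0] = \sum_{i=1}^m \frac{1}{i+1}\Pr[N=i].
\end{equation*}
Every term on the right is nonnegative, and for $p \in (0,1)$ and $m \geq 1$ each $\Pr[N=i]$ is strictly positive, so $d(p,m) > 0$. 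This presentation also makes the intuition transparent: $\E[1/(N+1)]$ is the expected share of reward when at least $A_1$ is competing, and $(1-p)^m$ is the probability that $A_1$ wins outright from $v_1$; $d(p,m)$ measures the boost from the cases in which at least one other agent also races to $t$.

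As a backup (and sanity check), one can work purely with calculus. Define $f(p) = 1 - (1-p)^m(1+pm)$. Then $f(0) = 0$ and, after differentiating by the product rule,
\begin{equation*}
    f'(p) = m(1-p)^{m-1}(1+pm) - m(1-p)^m = m(1-p)^{m-1}\bigl[(1+pm)-(1-p)\bigr] = p m(m+1)(1-p)^{m-1}.
\end{equation*}
This is strictly positive on $(0,1)$, so $f(p) > f(0) = 0$ on that interval, hence $d(p,m) > 0$.

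I do not anticipate any real obstacle: the statement is an elementary inequality whose content is already latent in the binomial identity just established. The only points requiring care are (i) confirming algebraically that $\E[1/(N+1)] - (1-p)^m$ equals the stated closed form $\frac{1-(1-p)^m(1+pm)}{p(m+1)}$ (a one-line common-denominator check), and (ii) noting that in the relevant regime $p$ is bounded away from $0$ (by the theorem's hypothesis $p > m^{-1}\log(1+m+(m+1)/(2c^{n-1}))$), so the denominator is nondegenerate.
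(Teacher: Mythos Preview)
Your proposal is correct. Both routes you give are valid: the probabilistic one cleanly exploits the identity $\E[1/(N+1)] = \frac{1-(1-p)^{m+1}}{p(m+1)}$ just proved, and the common-denominator check that $\E[1/(N+1)] - (1-p)^m = d(p,m)$ goes through exactly as you say. The calculus backup is also correct (your derivative computation is right).

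The paper, however, takes a different route. It reduces to the numerator inequality $1 > (1-p)^m(1+pm)$ and then applies the bound $1+x \le e^x$ twice: first as $(1-p)^m \le e^{-pm}$ to upper-bound the left side by $e^{-pm}(1+pm)$, and then as $1+pm < e^{pm}$ to finish. So the paper's argument is purely analytic and self-contained, not using the binomial identity at all. Your probabilistic argument is arguably more illuminating in context, since it makes explicit that $d(p,m)$ is the expected reward gain $\sum_{i\ge 1}\frac{1}{i+1}\Pr[N=i]$ from the event that at least one competitor also takes $P_0$; the paper's exponential-bound argument is shorter but more opaque. Your calculus route is closest in spirit to the paper's, though still distinct (monotonicity of $f$ versus the $e^x$ sandwich).
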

    \begin{proof}
        We want:
        \begin{align*}
            \frac{1 - (1-p)^{m}(1+pm)}{p(m+1)} &> 0 \\
            1 - (1-p)^{m}(1+pm) &> 0 \\
            1 &> (1-p)^{m}(1+pm) \\
            1 &> e^{-pm}(1+pm) \tag{$1+x \le e^x$}\\
            e^{pm} &> 1+pm
        \end{align*}
        And the last line holds by again using the fact that $1+x \le e^x$.
    \qed
    \end{proof}
    We've shown that when the agent's bias is less than $rd(p,m) + c$, they prefer  $P_0$ to $P_n$ (assuming everyone else takes $P_0$ with probability $p$). Now, if their bias is above $rd(p,m) + c$, we know that they will progress to $v_1$. We claim that they will surely procrastinate all the way to $v_{n-1}$. To see why, since we are analyzing a Bayes-Nash equilibrium, the agent assumes that all other agents follow the equilibrium strategy, which means that all their competitors take either $P_0$ or $P_n$. Thus, the rewards from paths $P_1$ to $P_{n-1}$ are all identical. So by the construction of the $n$-fan, the agent procrastinates until $v_{n-1}$. If $p$ is not extremely low, they will then go to $v_n$, as the following claim demonstrates.
    \begin{claim}
        If $p > \log(1+m + \frac{m+1}{2c^{n-1}})/m$ and the agent has bias $b > rd(p,m)+c$, then the agent takes path $P_n$.
    \end{claim}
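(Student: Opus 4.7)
The plan is to verify that, at $v_{n-1}$ with bias $b > rd(p,m) + c$, the naive agent's perceived cost of continuing to $v_n$ is at most that of heading directly to $t$. Because in the $n$-fan the procrastination edges have zero cost and $c(v_i, t) = c^i$,
\begin{equation*}
    C_n(v_{n-1}, v_n) = c^n - \E[R(P_n)], \qquad C_n(v_{n-1}, t) = b\, c^{n-1} - \E[R(P_{n-1})],
\end{equation*}
so the agent prefers $v_n$ if and only if $\E[R(P_{n-1})] - \E[R(P_n)] \leq c^{n-1}(b-c)$.

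Next I would compute the two expected rewards. Under the equilibrium strategy each of the $m$ other competitors independently plays $P_0$ with probability $p$ and $P_n$ otherwise, so both $P_{n-1}$ and $P_n$ yield zero reward whenever any competitor plays $P_0$. Conditional on the probability-$(1-p)^m$ event that every competitor plays $P_n$, the shorter path $P_{n-1}$ strictly beats all of them (reward $r$) while $P_n$ ties all of them (reward $r/(m+1)$), giving
\begin{equation*}
    \E[R(P_{n-1})] - \E[R(P_n)] = r(1-p)^m \cdot \tfrac{m}{m+1}.
\end{equation*}

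I would then use $b > rd(p,m) + c$ to lower-bound $c^{n-1}(b-c) > c^{n-1}\, r\, d(p,m)$, so it suffices to show $(1-p)^m \cdot m/(m+1) \leq c^{n-1} d(p,m)$. Substituting the closed form $d(p,m) = \frac{1 - (1-p)^m(1+pm)}{p(m+1)}$ and clearing denominators collapses this to the single algebraic inequality
\begin{equation*}
    (1-p)^m\bigl[c^{n-1} + pm(1 + c^{n-1})\bigr] \leq c^{n-1}.
\end{equation*}
Applying $(1-p)^m \leq e^{-pm}$ and dividing through by $c^{n-1}$ reduces this to the clean exponential inequality $e^{pm} \geq 1 + pm(1 + 1/c^{n-1})$.

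The main obstacle is closing the final numerical gap: deducing this exponential inequality from the hypothesis $e^{pm} > 1 + m + \tfrac{m+1}{2c^{n-1}}$. The plan is to bound $pm \leq m$ on the linear part of the right-hand side, leaving a comparison between $m/c^{n-1}$ and $(m+1)/(2c^{n-1})$ whose sign depends on $m$; any residual slack will have to be recouped either by sharpening the estimate for $(1-p)^m$ (for instance splitting the left-hand side into $c^{n-1}(1-p)^m(1+pm) + pm(1-p)^m$ and using the tighter bound $(1-p)^m(1+pm) \leq 1$ on the first piece, which is strict for $p > 0$), or by directly verifying the implication via monotonicity of $e^x - 1 - x(1 + 1/c^{n-1})$ in $x = pm$ on the range determined by the hypothesis. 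Once this scalar implication is confirmed, the chain of inequalities is complete, the comparison of naive costs yields a strict preference for $v_n$ at $v_{n-1}$, and the claim follows.
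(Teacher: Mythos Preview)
Your overall architecture is exactly the paper's: compare perceived costs at $v_{n-1}$, extract the bias threshold below which the agent would jump to $t$, and show the interval $\bigl[\,rd(p,m)+c,\ \text{threshold}\,\bigr]$ is empty under the hypothesis on $p$, using $(1-p)^m\le e^{-pm}$ and then $p\le 1$.

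The one substantive discrepancy is the tie reward on $P_n$. You take $\E[R(P_n)]=\dfrac{r}{m+1}(1-p)^m$, since all $m{+}1$ agents tie and split the reward evenly; this gives the reward gap $\E[R(P_{n-1})]-\E[R(P_n)]=r(1-p)^m\cdot \dfrac{m}{m+1}$. The paper instead writes the utility of going to $v_n$ as $-c^n+\dfrac{r}{2}(1-p)^m$, i.e.\ it uses a tie share of $r/2$ rather than $r/(m+1)$. With the paper's $r/2$, the threshold on $b$ is $\dfrac{r(1-p)^m}{2c^{n-1}}+c$, and the emptiness condition, after the same two relaxations $(1-p)^m\le e^{-pm}$ and $p\le 1$, lands \emph{exactly} on
\[
e^{pm}>1+m+\frac{m+1}{2c^{n-1}},
\]
which is the stated hypothesis. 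So in the paper's argument there is no residual gap at all.

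Your ``main obstacle'' is therefore an artifact of this discrepancy. With your (model-consistent) $r/(m+1)$ share, the same crude bounds yield the slightly stronger sufficient condition $e^{pm}>1+m+\dfrac{m}{c^{n-1}}$, which for $m\ge 2$ is not implied by the stated hypothesis via $p\le 1$ alone. If you want your version to close, you can either (i) adopt the paper's $r/2$ at $P_n$, after which the algebra matches line for line, or (ii) keep $r/(m+1)$ and note that the natural threshold becomes $p>\dfrac{1}{m}\log\!\bigl(1+m+\tfrac{m}{c^{n-1}}\bigr)$; the claim as literally stated can still be salvaged with a short extra argument (your target $e^{pm}\ge 1+pm(1+1/c^{n-1})$ has its positive root at most $\approx 1.256$ since $1+1/c^{n-1}\le 2$, and the hypothesis forces $pm>\log(1+m)$, which already exceeds this for $m\ge 3$, with $m=1,2$ checked directly), but the crude $p\le 1$ step alone will not do it.
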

    \begin{proof}
        We've just shown that if $b > rd(p,m) + c$, they go to $v_{n-1}$. From there, going to $t$ has utility $-b_1c^{n-1}+r(1-p)^m$ and going to $v_n$ has utility $-c^n + r/2(1-p)^m$. So, the agent prefers $t$ when:
        \begin{align*}
            -c^n + r/2(1-p)^m &< -b_1c^{n-1}+r(1-p)^m \\
            (b_1 - c) c^{n-1} &< r/2 \cdot (1-p)^m \\
            b_1 &< \frac{r(1-p)^m}{2c^{n-1}} + c
        \end{align*}
        So, if the agents bias is in the interval $[rd(p,m) + c, \frac{r(1-p)^m}{2c^{n-1}} + c]$, they prefer to deviate from the equilibrium strategy (in \autoref{thm:bneManyAgent}) by going to $v_{n-1}$. We now compute when this interval is empty:
        \begin{align*}
            r\frac{1 - (1-p)^{m}(1+pm)}{p(m+1)} + c &>  \frac{r(1-p)^m}{2c^{n-1}} + c \\
            \frac{1 - (1-p)^{m}(1+pm)}{p(m+1)} &>\frac{(1-p)^m}{2c^{n-1}} \\
            2c^{n-1}(1 - (1-p)^{m}(1+pm)) &> (1-p)^m p(m+1) \\
            2c^{n-1} &> (1-p)^m p(m+1) + 2c^{n-1}(1-p)^{m}(1+pm) \\
            2c^{n-1} &> (1-p)^m (2c^{n-1} + pm(2c^{n-1}+1) + p) \\
            2c^{n-1} &> e^{-pm} (2c^{n-1} + pm(2c^{n-1}+1) + p) \tag{Since $(1-p)^m \le e^{-pm}$}\\
            e^{pm} &> 1 + mp\frac{2c^{n-1}+1}{2c^{n-1}} + \frac{p}{2c^{n-1}}\\
            e^{pm} &> 1 + m\frac{2c^{n-1}+1}{2c^{n-1}} + \frac{1}{2c^{n-1}} \tag{Since $p \le 1$}\\
            e^{pm} &> 1 + m + \frac{1}{2c^{n-1}(m+1)} \\
            p &> \frac{\log(1+m + \frac{m+1}{2c^{n-1}})}{m}
        \end{align*}
        This is thus sufficient for the agent to take path $P_n$.
    \qed
    \end{proof}
    So, by the claim, if $p > \log(1+m + \frac{m+1}{2c^{n-1}})/m$, the following strategy is a Bayes-Nash equilibrium:
    \begin{align*}
        P(b) = \begin{cases}
            \text{take }P_0, & b \le rd(p,m) + c\\
            \text{take }P_n, & \text{otherwise}\\
        \end{cases}
    \end{align*}
    Which is exactly what the theorem states.
\qed    
\end{proof}
\begin{proof}[Proof of \autoref{thm:erdManyAgent}]
    Recall that, for the equal revenue distribution shifted by $c$, $F(z) = 1-\frac{1}{z-c+1}$. Now, we attempt to solve $F(rd(p,m) + c) = p$ for $p$:
    \begin{align*}
        1-\frac{1}{rd(p,m)+1} &= p\\
        rd(p,m) + 1 - 1 &= p(rd(p,m) + 1) \\
        r d(p,m) &= p(rd(p,m) + 1) \\
        r d(p,m)(1-p) &= p \\
        r(1-p) \cdot \frac{1 - (1-p)^{m}(1+pm)}{p(m+1)} &= p \\
        (1-p)\cdot \frac{1 - (1-p)^{m}(1+pm)}{p^2} &= \frac{m+1}{r}
    \end{align*}
     We only need an upper bound on $p$, not to solve this equation   for general $m$. Implicit differentiation shows that $p$ is increasing in $m$ when the above equation is satisfied. =Given this fact, an upper bound for $p$ is simply the limit as $m \to \infty$, which is:
    \begin{align*}
        \frac{1-p}{p^2} &= \frac{1}{s} \\
        s - sp - p^2 &= 0 \\
        p &= \frac{\sqrt{4s+s^2} - s}{2}
    \end{align*}
\qed
\end{proof}

\end{document}